\newcommand{\col}{}
\newcommand{\eps}{\varepsilon}
\newcommand{\SoS}{\text{\sc{SoS}}}
\newcommand{\set}[1]{\left\{ #1 \right\}}
\newcommand{\PS}{\mathcal{P}}
\newcommand{\y}{y^N}
\newcommand{\fallfac}[2]{{#1}^{\underline{#2}}}
\newcommand{\risefac}[2]{{#1}^{\overline{#2}}}
\newcommand{\K}{\omega}
\newcommand{\R}{\mathbb{R}}
\DeclarePairedDelimiter\floor{\lfloor}{\rfloor}
\newtheorem{theorem}{Theorem}
\newtheorem{lemma}[theorem]{Lemma}
\newtheorem{corollary}[theorem]{Corollary}
\newtheorem{definition}{Definition}
\title{Sum-of-squares hierarchy lower bounds\\ for symmetric formulations\thanks{Supported by the Swiss National Science Foundation project 200020-144491/1 ``Approximation Algorithms for Machine Scheduling Through Theory and Experiments'' and by Sciex Project 12.311.}}
\author{ Adam Kurpisz \quad Samuli Lepp\"anen~\quad Monaldo Mastrolilli\\
\small{IDSIA, 6928 Manno,
Switzerland}
{\{adam, samuli, monaldo\}@idsia.ch}
}
\date{}
\begin{document}
\maketitle
\begin{abstract}
We introduce a method for proving  Sum-of-Squares (SoS)/ Lasserre hierarchy lower bounds when the initial problem formulation exhibits a high degree of symmetry.
Our main technical theorem allows us to reduce the study of the positive semidefiniteness to the analysis of ``well-behaved'' univariate polynomial inequalities.

We illustrate the technique on two problems, one unconstrained and the other with constraints.
More precisely, we give a short elementary proof of Grigoriev/Laurent lower bound for finding the integer cut polytope of the complete graph. We also show that the SoS hierarchy requires a non-constant number of rounds to improve the initial integrality gap of 2 for the \textsc{Min-Knapsack} linear program strengthened with cover inequalities. 

\end{abstract}
\section{Introduction}
Proving lower bounds for the Sum-of-Squares (SoS)/Lasserre hierarchy~\cite{Lasserre01,parrilo00} has attracted notable attention in the theoretical computer science community during the last decade, see e.g. \cite{BhaskaraCVGZ12,Cheung07,Chla12,Grigoriev01,Grigoriev01b,Laurent03a,LeeRagSteu15,MekaPW15,Schoenebeck08,Tulsiani09}. This is partly because the hierarchy captures many of the best known approximation algorithms based on semidefinite programming (SDP) for several natural 0/1 optimization problems (see~\cite{LeeRagSteu15} for a recent result). Indeed, it can be argued that the SoS hierarchy is the strongest candidate to be the ``optimal'' meta-algorithm predicted by the Unique Games Conjecture (UGC)~\cite{Khot02a,Raghavendra08}. On the other hand, the hierarchy is also one of the best known candidates for refuting the conjecture since it is still conceivable that one could show that the SoS hierarchy achieves better approximation guarantees than the UGC predicts (see \cite{BarakS14} for discussion). Despite the interest in the algorithm and due to the many technical challenges presented by semidefinite programming, only relatively few techniques are known for proving lower bounds for the hierarchy. In particular, several integrality gap results follow from applying gadget reductions to the few known original lower bound constructions.

Indeed, many of the known lower bounds for the SoS hierarchy originated in the works of Grigoriev~\cite{Grigoriev01,Grigoriev01b}.\footnote{More precisely, Grigoriev considers the positivstellensatz proof system, which is the dual of the SoS hierarchy considered in this paper. For brevity, we will use SoS hierarchy/proof system interchangeably.}
We defer the formal definition of the hierarchy for later and only point out that solving the hierarchy after $t$ rounds takes $n^{O(t)}$ time.
In~\cite{Grigoriev01b} Grigoriev showed that random \textsc{3Xor} or \textsc{3Sat} instances cannot be solved even by $\Omega(n)$ rounds of the SoS hierarchy (some of these results were later independently rediscovered by Schoenebeck~\cite{Schoenebeck08}). Lower bounds, such as those of~\cite{BhaskaraCVGZ12,Tulsiani09} rely on \cite{Grigoriev01b,Schoenebeck08} combined with gadget reductions. Another important lower bound was also given by Grigoriev~\cite{Grigoriev01} for the \textsc{Knapsack} problem (a simplified proof can be found in~\cite{GrigorievHP02}), showing that the SoS hierarchy cannot prove within $\lfloor n/2 \rfloor$ rounds that the polytope $\{x\in [0,1]^n: \sum_{i=1}^n x_i= n/2\}$ contains no integer point when $n$ is odd. Using essentially the same construction as in~\cite{GrigorievHP02}, Laurent~\cite{Laurent03a} independently showed that $\lfloor \frac{n}{2} \rfloor$ rounds are not enough for finding the integer cut polytope of the complete graph with $n$ nodes, where $n$ is odd (this result was recently shown to be tight in~\cite{FawziSaundersonParrilo15}).\footnote{The two problems, \textsc{Knapsack} and \textsc{Max-Cut} in complete graphs, considered respectively in~\cite{Grigoriev01,GrigorievHP02} and in~\cite{Laurent03a}, are essentially the same and we will use \textsc{Max-Cut} to refer to both.} By using several new ideas and techniques, but a similar starting point as in~\cite{GrigorievHP02,Laurent03a}, Meka, Potechin and Wigderson~\cite{MekaPW15} were able to show a lower bound of $\Omega(\log^{1/2} n)$ for the \textsc{Planted-Clique} problem. Common to the works~\cite{GrigorievHP02,Laurent03a} and~\cite{MekaPW15} is that the matrix involved in the analysis has a large kernel, and they prove that a principal submatrix is positive definite by applying the theory of association schemes \cite{Godsil10}.
It is also interesting to point out that for the class of \textsc{Max-CSP}s, Lee, Raghavendra and Steurer~\cite{LeeRagSteu15} proved that the SoS relaxation yields the ``optimal'' approximation, meaning that SDPs of polynomial-size are equivalent in power to those
arising from $O(1)$ rounds of the SoS relaxations. Then, by appealing to the result by Grigoriev/Laurent~\cite{Grigoriev01,Laurent03a} they showed an exponential lower bound on the size of SDP formulations for the integer cut polytope.
For different techniques to obtain lower bounds, we refer for example to the recent papers \cite{BarakCK15,KurpiszLM15b,KurpiszLM15} (see also Section~\ref{sect:further}) and the survey \cite{Chla12} for an overview of previous results.

In this paper we introduce a method for proving  SoS hierarchy lower bounds when the initial problem formulation exhibits a high degree of symmetry. Our main technical theorem (Theorem \ref{thm:PSD_as_polynomial}) allows us to reduce the study of the positive semidefiniteness to the analysis of ``well-behaved'' univariate polynomial inequalities. The theorem applies whenever the solution and constraints are \emph{symmetric}, informally meaning that all subsets of the variables of equal cardinality play the same role in the formulation (see Section \ref{sect:main_theorem} for the formal definition). For example, the solution in \cite{Grigoriev01,GrigorievHP02,Laurent03a} for \textsc{Max-Cut} is symmetric in this sense. 

We note that exploiting symmetry reduces the number of variables involved in the analysis, and different ways of utilizing symmetry have been widely used in the past for proving integrality gaps for different hierarchies, see for example~\cite{BlekhermanGP14,GoemansT01,Grigoriev01b,HongT08,KurpiszLM15,StephenT99}.
{\col An interesting difference of our approach from others is that we establish several lower bounds without fully identifying the formula of eigenvectors. 
More specifically, the common task in this context is to identify the spectral structure to get a simple diagonalized form. In the previous papers the moment matrices belong to the Bose-Mesner algebra of a well-studied association scheme, and hence one can use the existing theory. In this paper, instead of identifying the spectral structure completely, we identify only possible forms and propose to test all the possible candidates.
This is in fact an important point, since the approach may be extended even if the underlying symmetry is imperfect or its spectral property is not well understood.
}

The proof of Theorem \ref{thm:PSD_as_polynomial} is obtained by a sequence of elementary operations, as opposed to notions such as big kernel in the matrix form, the use of interlacing eigenvalues, the machinery of association schemes and various results about hyper-geometric series as in~\cite{Grigoriev01,GrigorievHP02,Laurent03a}.  Thus Theorem~\ref{thm:PSD_as_polynomial} applies to the whole class of symmetric solutions, even when several conditions and machinery exploited in \cite{Grigoriev01,GrigorievHP02,Laurent03a} cannot be directly applied. For example the kernel dimension, which was one of the important key property used to prove the results in ~\cite{Grigoriev01,GrigorievHP02,Laurent03a}, depends on the particular solution that is used and it is not a general property of the class of symmetric solutions. The solutions for two problems considered in this paper have completely different kernel sizes of the analyzed matrices, one large and the other zero.

We demonstrate the technique with two illustrative and complementary applications. First, we show that the analysis of the lower bound for \textsc{Max-Cut} in \cite{Grigoriev01,GrigorievHP02,Laurent03a} simplifies to few elementary calculations once the main theorem is in place. This result is partially motivated by the open question posed by O'Donnell~\cite{ODonnell13} of finding a simpler proof for Grigoriev's lower bound for the \textsc{Knapsack} problem. 


As a second application we consider a constrained problem. We show that after $\Omega(\log^{1-\epsilon} n)$ levels the SoS hierarchy does not improve the integrality gap of $2$ for the \textsc{Min-knapsack} linear program formulation strengthened with \emph{cover inequalities}~\cite{CarrFLP00} introduced by Wolsey \cite{Wolsey75}. Adding cover inequalities is currently the most successful approach for capacitated covering problems of this type~\cite{BansalBN08,BansalGK10,BansalP10,CarnesS08,ChakrabartyGK10}.

{\col Our result is the first SoS lower bound for formulations with cover inequalities. In this application we demonstrate that our technique can also be used for suggesting the solution and for analyzing its feasibility.

Finally we point it out that the same analysis can be used to provide a non trivial lower bound to an open question raised by Laurent \cite{Laurent03} regarding the Lasserre rank of the knapsack problem (see Section \ref{sect:further} for a discussion).
}



\section{The SoS hierarchy}

Consider a $0/1$ optimization problem with $m \geq 0$ linear constraints $g_{\ell}(x)\geq 0$, for $\ell\in[m]$ and $x \in \mathbb{R}^n$. We are interested in approximating the convex hull of the integral points of the set $K = \set{x \in \mathbb{R}^n~|~g_{\ell}(x)\geq 0, \forall \ell\in [m]}$ with the SoS hierarchy defined in the following.

The form of the SoS hierarchy we use in this paper (Definition~\ref{def:sos_definition}) is equivalent to the one used in literature (see e.g.~\cite{BarakBHKSZ12,Lasserre01,Laurent03}). It follows from applying a change of basis to the dual certificate of the refutation of the proof system~\cite{Laurent03} (see also \cite{MekaPW15} for discussion on the connection to the proof system). We use this change of basis  in order to obtain a useful decomposition of the moment matrices as a sum of rank one matrices of special kind. This will play an important role in our analysis. We refer the reader to Appendix \ref{app:SoS} for more details and for a mapping between the different forms.

For any $I\subseteq N=\{1,\ldots,n\}$, let $x_I$ denote the $0/1$ solution obtained by setting $x_i = 1$ for $i \in I$, and $x_i = 0$ for $i\in N\setminus I$. We denote by $g_\ell(x_I)$ the value of the constraint evaluated at $x_I$.
For each integral solution $x_I$, where $I\subseteq N$, in the SoS hierarchy defined below there is a variable $\y_I$ that can be interpreted as the ``relaxed'' indicator variable for the solution $x_I$.
We point out that in this formulation of the hierarchy the number of variables $\{\y_I:I\subseteq N\}$ is exponential in $n$, but this is not a problem in our context since we are interested in proving lower bounds rather than solving an optimization problem.

Let $\PS_t(N)$ be the collection of subsets of $N$ of size at most $t\in\mathbb{N}$. For every $I \subseteq N$, the $q$-\text{zeta vector}  $Z_I \in \mathbb{R}^{\PS_q(N)}$ is a $0/1$ vector with $J$-th entry ($|J| \leq q$) equal to $1$ if and only if $J \subseteq I$.\footnote{In order to keep the notation simple, we do not emphasize the parameter $q$ as the dimension of the vectors should be clear from the context.}
Note that $Z_IZ_I^\top$ is a rank one matrix and the matrices considered in Definition~\ref{def:sos_definition} are linear combinations of these rank one matrices.

\begin{definition}~\label{def:sos_definition}
 The $t$-th round SoS hierarchy relaxation for the set $K$, denoted by $\SoS_t(K)$, is the set of values $\{\y_I \in \mathbb{R}:  \forall I \subseteq N\}$ that satisfy
 \begin{eqnarray}
 \sum_{\substack{I \subseteq N}} \y_I&=& 1, \label{eq:sos_sum_condition} \\
 \sum_{\substack{I \subseteq N}} \y_I Z_I Z_I^\top &\succeq& 0, \text{ where } Z_I \in \mathbb{R}^{\PS_{t+d}(N)} \label{eq:sos_variables}\\
 \sum_{\substack{I \subseteq N}} g_\ell(x_I)\y_I Z_IZ_I^\top &\succeq& 0, ~ \forall \ell\in [m]  \text{, where }  Z_I \in \mathbb{R}^{\PS_{t}(N)}  \label{eq:sos_constraints}
\end{eqnarray}
where $d = 0$ if $m = 0$ (no linear constraints), otherwise $d = 1$.
\end{definition}
It is straightforward to see that the SoS hierarchy formulation given in Definition~\ref{def:sos_definition} is a relaxation of the integral polytope. Indeed consider any feasible integral solution $x_I \in K$ and set $\y_I=1$ and the other variables to zero. This solution clearly satisfies Condition~\eqref{eq:sos_sum_condition}, Condition~\eqref{eq:sos_variables} because the rank one matrix $Z_IZ_I^\top$ is positive semidefinite (PSD), and Condition~\eqref{eq:sos_constraints} since $x_I\in K$.

\section{The main technical theorem} \label{sect:main_theorem}

{\col The main result of this paper (see Theorem~\ref{thm:PSD_as_polynomial} below) allows us to reduce the study of the positive semidefiniteness for matrices~\eqref{eq:sos_variables} and~\eqref{eq:sos_constraints} to the analysis of ``well-behaved'' univariate polynomial inequalities.} It can be applied whenever the solutions and constraints are \emph{symmetric}, namely they are invariant under all permutations $\pi$ of the set $N$: $z_I^N=z^N_{\pi(I)}$ for all $I\subseteq N$ (equivalently when $z_I^N = z^N_J$ whenever $|I|=|J|$),\footnote{We define the set-valued permutation by $\pi(I) = \set{\pi(i)~|~i \in I}$. \label{footnote:set_permutation}} where $z_I^N$ is understood to denote either $y^N_I$ or $g_\ell(x_I)y^N_I$.
For example, the solution for \textsc{Max-Cut} considered by Grigoriev~\cite{Grigoriev01} and Laurent~\cite{Laurent03a} belongs to this class. 

{\col
\begin{theorem} \label{thm:PSD_as_polynomial}
For any $t\in \{1,\ldots,n\}$, let $\mathcal{S}_t$ be the set of all polynomials $G_h(k)\in \R[k]$, for $h \in \{0,\ldots,t\}$, that satisfy the following conditions: 
\begin{align}
G_h(k)&\in \R[k]_{2t} \label{eq:degree}\\
G_h(k)&=0 \qquad \text{for }k\in \{0,\ldots,h-1\}\cup \{n-h+1,\ldots,n\}\label{eq:zeros}\\
G_h(k)&\geq0 \qquad \text{for }k\in [h-1,\ldots,n-h+1]\label{eq:geq0}
\end{align} 
For any fixed set of values $\{z^N_k \in \R:k =0,
\ldots,n\}$,
if the following holds 
\begin{align}
\sum_{k=h}^{n-h} z^N_k \binom{n}{k} G_h(k) &\geq 0 \qquad \forall G_h(k)\in \mathcal{S}_t \label{eq:sym_psd_cond}
\end{align}
then matrix \eqref{matrixM} is positive-semidefinite 
\begin{equation}\label{matrixM}
 \sum_{k = 0}^n z^N_k \sum_{\substack{I \subseteq N \\ |I| = k}} Z_IZ_I^\top \qquad (\text{where } Z_I \in \mathbb{R}^{\PS_{t}(N)})
\end{equation}
\end{theorem}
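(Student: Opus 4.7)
My plan is to exploit the $S_n$-symmetry of the matrix $M := \sum_{k} z^N_k \sum_{|I|=k} Z_I Z_I^\top$. Because the coefficients depend only on $|I|$, $M$ commutes with the natural permutation action of $S_n$ on $\R^{\PS_t(N)}$ given by $\sigma\cdot e_A = e_{\sigma(A)}$, so it preserves the isotypic decomposition $\R^{\PS_t(N)} = \bigoplus_{j} V^{(j)}$, where $V^{(j)}$ collects all copies of the Specht module $S^{(n-j,j)}$ and vanishes for $j > n/2$. Since distinct isotypic components are orthogonal and $M$-invariant, it suffices to prove $v^\top M v \geq 0$ for $v$ in each $V^{(j)}$ separately, and I will match the $j$-th isotypic contribution with the polynomial inequality at $h = j$.

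\textbf{Parameterizing $V^{(j)}$ and a factorization.} I use the raising operator $U:\R^{\binom{N}{h}}\to\R^{\binom{N}{h+1}}$, $U(e_A) := \sum_{i\notin A} e_{A\cup\{i\}}$, and its adjoint $D := U^*$ (which removes an element). The commutation $[D,U] = (n-2h)I$ at level $h$ implies that the unique copy $V_{h,j}\cong S^{(n-j,j)}$ inside $\R^{\binom{N}{h}}$ equals $U^{h-j}(V_{j,j})$, where $V_{j,j} := \ker D$ at level $j$. Thus every $v\in V^{(j)}$ has a unique representation $v = \sum_{h=j}^{t} U^{h-j}\phi_h$ with $\phi_h\in V_{j,j}$. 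Setting $\chi_I^{(h)} := \sum_{A\subseteq I,\,|A|=h} e_A$ and using the identity $\chi_I^{(h)} = D^{k-h}e_I/(k-h)!$ for $|I|=k$, moving $D$ across the inner product gives the key factorization
\[v^\top Z_I \;=\; \sum_{h=j}^t \langle U^{h-j}\phi_h,\chi_I^{(h)}\rangle \;=\; \bigl\langle \Phi(k),\,\chi_I^{(j)}\bigr\rangle, \qquad \Phi(k) := \sum_{h=j}^t (k-j)^{\underline{h-j}}\phi_h,\]
where $\Phi(k)$ is a polynomial in $k$ of degree $\leq t-j$ taking values in $V_{j,j}$.

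\textbf{Producing $G_j\in\mathcal{S}_t$.} Squaring and summing over $|I|=k$, the $S_n$-invariant operator $T_k := \sum_{|I|=k}\chi_I^{(j)}(\chi_I^{(j)})^\top$ acts on the irreducible $V_{j,j}$ as a scalar by Schur's lemma. The identity $D^m U^m \phi = m!(n-2j)^{\underline{m}}\phi$ for $\phi\in V_{j,j}$ (a one-line induction from $[D,U]=(n-2h)I$), together with $T_k = D^{k-j}U^{k-j}/((k-j)!)^2$, pins this scalar down to $\binom{n-2j}{k-j}$, so $\sum_{|I|=k}(v^\top Z_I)^2 = \binom{n-2j}{k-j}\|\Phi(k)\|^2$. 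Using $\binom{n-2j}{k-j} = \binom{n}{k}\,k^{\underline{j}}(n-k)^{\underline{j}}/n^{\underline{2j}}$ one obtains
\[v^\top M v \;=\; \frac{1}{n^{\underline{2j}}}\sum_{k} z^N_k\binom{n}{k}\,G_j(k), \qquad G_j(k) := k^{\underline{j}}(n-k)^{\underline{j}}\|\Phi(k)\|^2.\]
The polynomial $G_j$ has degree $\leq 2t$, vanishes on $\{0,\ldots,j-1\}\cup\{n-j+1,\ldots,n\}$, and is non-negative on the integer points of $[j-1,n-j+1]$ (both falling-factorial factors are $\geq 0$ there and $\|\Phi\|^2\geq 0$ always), so $G_j\in\mathcal{S}_t$ with $h=j$. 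Hypothesis \eqref{eq:sym_psd_cond} at $h=j$ then yields $v^\top M v\geq 0$.

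\textbf{Main obstacle.} The crux is producing the factorization $v^\top Z_I = \langle\Phi(k),\chi_I^{(j)}\rangle$, which cleanly separates the polynomial-in-$k$ part from the ``spherical-harmonic'' part depending on $I$; this is what reduces PSD to a scalar univariate inequality on each isotypic block rather than a matrix inequality. After that separation, the $\mathcal{S}_t$-membership of $G_j$ follows from elementary manipulations of binomial and falling-factorial symbols.
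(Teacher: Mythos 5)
Your proof is correct, and it takes a genuinely different route from the paper's. The paper avoids representation theory on purpose: it takes an arbitrary eigenvector $w$ of $M$, symmetrizes it over the stabilizer $\Pi_H$ of a smallest set $H$ for which the average survives, obtains an eigenvector constant on the orbits $(|Q|,|Q\cap H|)$, and then computes $\sum_{|I|=k}(u^\top Z_I)^2$ by direct binomial manipulations; the resulting $G_h$ (Definition~\ref{def:G(k)}) is a sum of $h+1$ weighted squares whose degree bound is \emph{not} manifest and requires a separate finite-difference argument (Lemma~\ref{th:degree}), and whose vanishing on $\{0,\dots,h-1\}\cup\{n-h+1,\dots,n\}$ needs its own argument via $u^\top Z_Q=0$. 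You instead invoke the full isotypic decomposition of $\R^{\PS_t(N)}$ under $S_n$, the $U/D$ raising--lowering operators on the Boolean lattice, and Schur's lemma, arriving at the closed form $G_j(k)=\fallfac{k}{j}\,\fallfac{(n-k)}{j}\,\|\Phi(k)\|^2$ from which all three membership conditions for $\mathcal{S}_t$ are immediate; your identities ($D\chi_I^{(h)}=(k-h+1)\chi_I^{(h-1)}$, $D^mU^m=m!\fallfac{(n-2j)}{m}$ on $\ker D$, and the binomial conversion $\binom{n-2j}{k-j}=\binom{n}{k}\fallfac{k}{j}\fallfac{(n-k)}{j}/\fallfac{n}{2j}$) all check out. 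What each approach buys: yours is shorter, makes the structure of the arising polynomials transparent (products of $\fallfac{k}{j}\fallfac{(n-k)}{j}$ with sums of squares, which would even sharpen the ``only if'' direction of Theorem~\ref{thm:PSD_as_polynomial_full}), but it uses exactly the association-scheme/Specht-module machinery the authors deliberately sidestep so that the method can survive imperfect symmetry. Two small points to tidy: condition \eqref{eq:geq0} requires nonnegativity on the \emph{real} interval $[j-1,n-j+1]$, not just its integer points --- your factor-by-factor argument does deliver this, so just state it that way; and for $t>n-j$ the operator $U^{h-j}$ annihilates $V_{j,j}$ once $h>n-j$, so the expansion of $v$ should be truncated at $h=\min\{t,n-j\}$ (this only lowers $\deg\Phi$ and changes nothing else).
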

Note that polynomial $G_h(k)$ in~\eqref{eq:geq0} is nonnegative in a \emph{real interval}, and in~\eqref{eq:zeros} it is zero for a \emph{finite set of integers}. Moreover, constraints \eqref{eq:sym_psd_cond} are trivially satisfied for $h> \floor{n/2}$.

Theorem~\ref{thm:PSD_as_polynomial} is a actually a corollary of a technical theorem that is not strictly necessary for the applications of this paper, and therefore deferred to a later section (see Theorem~\ref{thm:PSD_as_polynomial_full} in Section~\ref{sect:th1}).
The proof (given in Section~\ref{sect:th1}) is obtained by exploiting the high symmetry of the eigenvectors of the matrix appearing in \eqref{matrixM}. Condition~\eqref{eq:sym_psd_cond} corresponds to the requirement that the Rayleigh quotient being non-negative restricted to some highly symmetric vectors (which we show are the only ones we need to consider). }
%

\section{Max-Cut for the complete graph} \label{sect:maxcut}
In the \textsc{Max-Cut} problem, we are given an undirected graph
and we wish to find a partition of the vertices (a cut) which maximizes the number of edges whose
endpoints are on different sides of the partition (cut value).
For the complete graph with $n$ vertices, consider any solution with $\K$ vertices on one side and the remaining $n-\K$ on the other side of the partition. This gives a cut of value $\K(n-\K)$.
%
When $n$ is odd and for \emph{any} $\K\leq n/2$, Grigoriev~\cite{Grigoriev01} and Laurent~\cite{Laurent03a} considered the following solution (reformulated in the basis considered in Definition~\ref{def:sos_definition}, see Appendix~\ref{Sect:app_max_cut}):
 \begin{equation}\label{prob11}
\y_I= (n+1) \binom{{\K}}{n+1} \frac{(-1)^{n-|I|}}{{\K}-|I|} \qquad \forall I\subseteq N
\end{equation}
It is shown~\cite{Grigoriev01,Laurent03a} that~\eqref{prob11}  is a feasible solution for the SoS hierarchy of value $\K(n-\K)$, for \emph{any} $\K\leq n/2$ up to round $t\leq \lfloor \K\rfloor$. In particular for $\K=n/2$ the cut value of the SoS relaxation is strictly larger than the value of the optimal integral cut (i.e. $\lfloor\frac{n}{2}\rfloor(\lfloor\frac{n}{2}\rfloor+1)$), showing therefore an integrality gap at round~$\lfloor n/2\rfloor$.

We note that the formula for the solution~\eqref{prob11} is essentially implied by the requirement of having exactly $\K$ vertices on one side of the partition (see~\cite{Grigoriev01,Laurent03a} and~\cite{MekaPW15} for more details) and the core of the analysis in~\cite{Grigoriev01,Laurent03a} is in showing that~\eqref{prob11} is a feasible solution for the SoS hierarchy.
By taking advantage of Theorem~\ref{thm:PSD_as_polynomial}, the proof that~\eqref{prob11} is a feasible solution for the SoS relaxation follows by observing the fact below.
%
\begin{lemma}\label{th:polypreserve}
For any polynomial $P(x)\in\mathbb{R}[x]$ of degree $\leq n$ and $y_{i}^N=\y_I$ defined in~\eqref{prob11} we have
$$
\sum_{k=0}^n \binom{n}{k} y_k^N \,  P(k) = P(\K)
$$
\end{lemma}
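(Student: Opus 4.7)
My plan is to recognize the coefficients $\binom{n}{k}y_k^N$ as Lagrange interpolation weights at the node $k$ for the $n+1$ interpolation points $\{0,1,\ldots,n\}$ evaluated at $\K$; the identity will then be Lagrange interpolation for polynomials of degree at most $n$.

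First, I would unpack the generalized binomial coefficient. Writing
$$
(n+1)\binom{\K}{n+1} = \frac{\K(\K-1)\cdots(\K-n)}{n!} = \frac{1}{n!}\prod_{j=0}^{n}(\K-j),
$$
the definition of $y^N_I$ in~\eqref{prob11} gives, for any $I\subseteq N$ with $|I|=k$,
$$
y^N_k = \frac{(-1)^{n-k}}{n!\,(\K-k)}\prod_{j=0}^{n}(\K-j) = \frac{(-1)^{n-k}}{n!}\prod_{\substack{j=0\\ j\neq k}}^{n}(\K-j).
$$

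Next, multiplying by $\binom{n}{k}$ and using the elementary identity $\prod_{j\neq k}(k-j) = k!\,(-1)^{n-k}(n-k)!$, I would obtain
$$
\binom{n}{k} y^N_k \;=\; \frac{(-1)^{n-k}}{k!(n-k)!}\prod_{\substack{j=0\\ j\neq k}}^{n}(\K-j) \;=\; \prod_{\substack{j=0\\ j\neq k}}^{n}\frac{\K-j}{k-j} \;=\; L_k(\K),
$$
where $L_k$ is the standard Lagrange basis polynomial associated with the distinct nodes $0,1,\ldots,n$.

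Finally, since $\{L_k\}_{k=0}^{n}$ is the Lagrange basis for interpolation at $n+1$ distinct points, every polynomial $P(x)\in\R[x]$ of degree at most $n$ admits the exact representation $P(x)=\sum_{k=0}^{n}P(k)L_k(x)$. Evaluating at $x=\K$ yields
$$
\sum_{k=0}^{n}\binom{n}{k} y^N_k\,P(k) \;=\; \sum_{k=0}^{n}P(k)L_k(\K) \;=\; P(\K),
$$
which is exactly the claim. There is no real obstacle here: once one recognizes the Lagrange structure hidden in the formula~\eqref{prob11}, the lemma follows immediately, and this is precisely why the same closed form was natural in~\cite{Grigoriev01,Laurent03a}.
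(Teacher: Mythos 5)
Your proof is correct, and it takes a genuinely different route from the paper's. The paper argues via the polynomial remainder theorem: it writes $P(k)=(\K-k)Q(k)+P(\K)$ with $\deg Q\leq n-1$, observes that $\binom{n}{k}y^N_k(\K-k)$ is a fixed constant times $(-1)^{n-k}\binom{n}{k}$ so that the $Q$-part of the sum vanishes by the standard fact that the $n$-th alternating binomial sum annihilates polynomials of degree at most $n-1$, and then invokes the normalization $\sum_{k=0}^n\binom{n}{k}y^N_k=1$ for the remaining term. You instead unpack $(n+1)\binom{\K}{n+1}=\frac{1}{n!}\prod_{j=0}^n(\K-j)$ and identify $\binom{n}{k}y^N_k$ as the Lagrange basis polynomial $L_k$ for the nodes $\{0,\dots,n\}$ evaluated at $\K$, after which the identity is exactly the statement that Lagrange interpolation reproduces polynomials of degree at most $n$. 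Your computation checks out (in particular $\prod_{j\neq k}(k-j)=k!\,(-1)^{n-k}(n-k)!$ is right), and your route has two small advantages: it proves the normalization $\sum_k\binom{n}{k}y^N_k=1$ as the special case $P\equiv 1$ rather than assuming it separately, and it makes transparent \emph{why} the solution \eqref{prob11} has this closed form (the weights are forced by requiring the pseudo-distribution to behave like a point mass at $\K$ on all low-degree test polynomials). The paper's version is marginally shorter because it never needs to expand the generalized binomial coefficient. Either argument is a complete and elementary proof of the lemma.
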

\begin{proof}
By the polynomial remainder theorem $P(k)=(\K-k)Q(k)+P(\K)$, where $Q(k)$ is a unique polynomial of degree at most $n-1$. It follows that
\begin{align}
\sum_{k=0}^n \binom{n}{k} y_{k}^N \,  P(k) &= \underbrace{\sum_{k=0}^n \binom{n}{k} y_{k}^N \,  (\K-k)Q(k)}_{=0 }+P(\K) \underbrace{\sum_{k=0}^n \binom{n}{k} y_{k}^N}_{=1}=P(\K) \notag
\end{align}
since $\sum_{k=0}^n (-1)^k\binom{n}{k} Q(k)=0$ for any polynomial of degree at most $n-1$.

\end{proof}
Now by Lemma~\ref{th:polypreserve} we have $\sum_{k=0}^n y^N_k \binom{n}{k} G_h(k) =G_h(\K)$ and the feasibility of~\eqref{prob11} follows by Theorem~\ref{thm:PSD_as_polynomial}, since we have that $G_h(\omega) \geq 0$ whenever $t \leq \omega$ for $\omega \leq n/2$.


\section{Min-Knapsack with cover inequalities}\label{sect:knapgap}

The \textsc{Min-Knapsack} problem is defined as follows: we have $n$ items with costs $c_i$ and profits $p_i$, and we want to choose a subset of items such that the sum of the costs of the selected items is minimized and the sum of the profits is at least a given demand $P$. Formally, this can be formulated as an integer program $(IP) ~ \min \left\{ \sum_{j =1}^{n} c_jx_j : \\ \sum_{j =1}^{n} p_jx_j \geq P, x \in \{0,1\}^n \right\}$. It is easy to see that the natural linear program $(LP)$, obtained by relaxing $x \in \{0, 1\}^n$ to $x \in [0, 1]^n$ in $(IP)$, has an unbounded integrality gap.

By adding the \emph{Knapsack Cover} (KC) inequalities introduced by Wolsey \cite{Wolsey75} (see also \cite{CarrFLP00}), the arbitrarily large integrality gap of the natural LP can be reduced to 2 (and it is tight~\cite{CarrFLP00}).
The  KC  constraints are as follows:
$ \sum_{j \not \in A} p^A_j x_j \geq P- p(A)$ for all $A \subseteq N$,
where $p(A)=\sum_{i\in A}p_i$ and $p^A_j = \min \set{p_j, P- p(A)}$. Note that these constraints are valid constraints for integral solutions. Indeed, in the ``integral world'' if a set $A$ of items is picked we still need to cover $P-p(A)$; the remaining profits are ``trimmed'' to be at most $P-p(A)$ and this again does not remove any feasible integral solution.

The following instance~\cite{CarrFLP00} shows that the integrality gap implied by  KC  inequalities is~2: we have $n$ items of unit costs and profits. We are asked to select a set of items in order to obtain a profit of at least $1+1/(n-1)$. The resulting linear program formulation with  KC  inequalities is as follows (for  $x_i \in [0,1]$, $i=1,\ldots,n$)
\begin{eqnarray}
 \mbox{($LP^+$)}
  \min  \sum_{j =1}^{n} x_j   & \quad \mbox{s.t.} &  \sum_{j =1}^{n} x_j \geq 1+1/(n-1) \label{eq:knapsack_constraint}\\
 &  &  \sum_{j \in N'} x_j \geq 1 \qquad \forall N'\subseteq N:|N'|=n-1 \label{eq:knapsack_covering_constraint}
\end{eqnarray}
Note that the solution $x_i=1/(n-1)$ is a valid fractional solution of value $1+1/(n-1)$ whereas the optimal integral solution has value $2$.
In the following we show that $\SoS_t(LP^+)$, with $t$ arbitrarily close to {\col a logarithmic function of $n$}, admits the same integrality gap as the initial linear program ($LP^+$) relaxation.
\begin{theorem}
\label{thm:knapsack_gap_main}
For any $\delta >0$ and sufficiently large $n'$, let $t = \lfloor \log^{1-\delta} n'\rfloor$, $n=\lfloor \frac{n'}{t} \rfloor t$ and $\epsilon =o(t^{-1})$. Then the following solution is feasible for $\SoS_t(LP^+)$ with integrality gap of $2- o(1)$
\begin{eqnarray}\label{eq:knapsack_gap_solution}
\y_I= \binom{n}{|I|}^{-1} \cdot
\left\{ \begin{array}{ll}
\frac{(1 + \epsilon)n}{(n-1)\lfloor \log n\rfloor} & \text{for }|I|=\lfloor \log n\rfloor\\
\frac{\epsilon t}{jn}  & \text{for }|I|= j\frac{n}{t} \text{ and } j\in[t]\\
1-\sum_{\emptyset \neq I \subseteq N} \y_I & \text{for }I=\emptyset\\
0 & \text{otherwise}
\end{array}
\right.
\end{eqnarray}
\end{theorem}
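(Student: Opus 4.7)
The proof checks the four SoS conditions for the candidate~\eqref{eq:knapsack_gap_solution}: the objective value yielding the integrality gap, the normalization~\eqref{eq:sos_sum_condition}, the moment PSD condition~\eqref{eq:sos_variables}, and the constraint PSD conditions~\eqref{eq:sos_constraints}. Since $\y_I$ depends on $I$ only through $|I|$, whenever the constraint coefficient is also a function of $|I|$ (true for the moment matrix and the main knapsack constraint~\eqref{eq:knapsack_constraint}), Theorem~\ref{thm:PSD_as_polynomial} applies and reduces positive semidefiniteness to the scalar inequality $\sum_{k=h}^{n-h} z^N_k\binom{n}{k}G_h(k)\geq 0$ for every admissible $G_h$.

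First I compute the objective $\sum_I |I|\y_I=\sum_k k\binom{n}{k}\y_k=(1+\epsilon)n/(n-1)+t\epsilon=1+o(1)$, giving integrality gap $2-o(1)$ against the integer optimum $2$. The normalization is imposed by the $I=\emptyset$ case; the easy estimate $\sum_{k\geq 1}\binom{n}{k}\y_k=O(1/\log n)$ shows $\y_\emptyset=1-o(1)>0$, so every $\y_k\geq 0$ and condition~\eqref{eq:sos_variables} follows term-by-term. For the main knapsack constraint, the symmetric coefficient $z^N_k=(k-1-1/(n-1))\y_k$ is negative only at $k=0$, which is excluded from~\eqref{eq:sym_psd_cond} when $h\geq 1$. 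For $h=0$, I write any admissible $G_0\in\R[k]_{2t}$ (non-negative on $[-1,n+1]$) as a sum of squares of degree-$\leq t$ polynomials and reduce to showing that, for every real polynomial $p$ of degree $\leq t$, $p(0)^2$ is dominated by an appropriate weighted combination of $p(\lfloor\log n\rfloor)^2$ and $p(jn/t)^2$, $j=1,\ldots,t$. Lagrange interpolation at the $t+1$ nodes $\{\lfloor\log n\rfloor, n/t, 2n/t,\ldots,n\}$ (the support of the positive mass in $\y$) controls $p(0)$ in terms of the values of $p$ at these points, and the magnitudes in~\eqref{eq:knapsack_gap_solution}, tuned to $t=\lfloor\log^{1-\delta}n\rfloor$ and $\epsilon=o(t^{-1})$, are precisely what is needed for the weighted inequality to go through.

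The hardest piece is the KC family~\eqref{eq:knapsack_covering_constraint}, whose coefficient $(|I\cap N'|-1)\y_{|I|}$ is not a function of $|I|$ alone, so Theorem~\ref{thm:PSD_as_polynomial} does not apply directly. By symmetry of $\y$ under all permutations of $N$, all KC matrices are permutation-conjugate, so it suffices to treat one $N'=N\setminus\{i^*\}$. Splitting $\PS_t(N)=\PS_t(N')\sqcup\{J\cup\{i^*\}:J\in\PS_{t-1}(N')\}$ puts the matrix in block form
\[
M=\begin{pmatrix}A & B\\ B^\top & C\end{pmatrix},
\]
where each of $A,B,C$ is a sum of rank-one outer products of zeta vectors $u_{I'}$ and $v_{I'}$ of $I'$ in $N'$ with coefficients depending only on $|I'|$. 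Theorem~\ref{thm:PSD_as_polynomial} applied to $N'$ (or the more general Theorem~\ref{thm:PSD_as_polynomial_full}) handles each block individually; the chief obstacle I foresee is combining them into a joint PSD certificate, either by a Schur-complement argument $A-BC^+B^\top\succeq 0$ or by exhibiting $M$ directly as a sum of full block rank-ones scaled by the ratios $\y_{|I'|+1}/\y_{|I'|}$. After this reduction one is again left with a scalar inequality of the shape~\eqref{eq:sym_psd_cond} over $N'$, resolved by the same sum-of-squares/Lagrange argument used for the main knapsack constraint, with the magnitudes of $\y_{\lfloor\log n\rfloor}$, $\y_{jn/t}$ and $\y_0$ in~\eqref{eq:knapsack_gap_solution} tuned so the inequality holds with room to spare.
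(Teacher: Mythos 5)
Your treatment of the cover inequalities~\eqref{eq:knapsack_covering_constraint} is the genuine gap: you correctly observe that the coefficient $|I\cap N'|-1$ is not a function of $|I|$ alone, but the block decomposition and Schur-complement/joint-certificate route you sketch is left unexecuted (you say so yourself), and it is far harder than what is actually needed. The paper's resolution is a one-line pointwise domination: since $\y_I\geq 0$ and each $Z_IZ_I^\top\succeq 0$, and since $|I\cap N'|-1\geq |I|-2$ for every $I$ (with equality when the missing element lies in $I$), the cover-constraint matrix dominates, term by term in the PSD order, the fully symmetric matrix $\sum_{I\neq\emptyset}\y_I(|I|-2)Z_IZ_I^\top-\y_\emptyset Z_\emptyset Z_\emptyset^\top$. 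The same symmetric matrix also implies the main knapsack constraint, so a single application of Theorem~\ref{thm:PSD_as_polynomial} to the coefficients $z^N_k=(k-2)\y_k$ (with the $-\frac{n}{n-1}G_h(0)$ term moved to the right-hand side) handles all $n+1$ constraints at once. Without this observation (or a completed version of your block argument) the proof does not go through, since the KC constraints are the entire point of the formulation.

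Two smaller issues. First, a polynomial of degree $2t$ that is nonnegative on $[-1,n+1]$ is \emph{not} in general a sum of squares of polynomials (e.g.\ $(k+1)(n+1-k)$); the correct justification for restricting to $G_0=P^2$ with $\deg P\leq t$ is Theorem~\ref{thm:PSD_as_polynomial_full} together with Definition~\ref{def:G(k)}, where the $h=0$ polynomials are exactly single squares. Second, your Lagrange-interpolation argument at the $t+1$ support nodes is a legitimate alternative to the paper's two-case root analysis (it computes the exact extremal ratio $\min\sum_j w_jP(x_j)^2/P(0)^2=\bigl(\sum_j L_j(0)^2/w_j\bigr)^{-1}$ rather than exhibiting one dominating term), but the naive bound $P(0)^2\leq (t+1)\sum_j L_j(0)^2P(x_j)^2$ loses a factor $t+1$ that is fatal at the dominant node $\lfloor\log n\rfloor$, whose weight exceeds the target only by $1+O(\epsilon)$; you must use the sharp weighted Cauchy--Schwarz form and verify that the contributions $L_j(0)^2/w_j\approx \beta^2t^2\binom{t}{j}^2/(j^2n^2\epsilon)$ of the spaced nodes are absorbable, which is where the same constraints on $\epsilon$ as in the paper's case~2 reappear.
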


\subsection{Overview of the proof}

An integrality gap proof for the SoS hierarchy can in general be thought of having two steps: first, choosing a solution to the hierarchy that attains a superoptimal value, and second showing that this solution is feasible for the hierarchy. We take advantage of Theorem~\ref{thm:PSD_as_polynomial} in both steps. Here we describe the overview of our integrality gap construction while keeping the discussion informal and technical details minimal, the proof can be found in Appendix~\ref{sect:proof_of_knapsack_cover_thm}.

\paragraph{Choosing the solution.}
We make the following simplifying assumptions about the structure of the variables $y_I^N$: due to symmetry in the problem we set $y_I^N = y_J^N = y_k^N$ for each $I, J$ such that $|I| = |J|=k$, and for every $I \subseteq N$ we set $y_I^N \geq 0$ in order to satisfy~\eqref{eq:sos_variables} for free. Furthermore, in order to have an integrality gap (i.e. a small objective function value), we guess that $y^N_0 \approx 1$ forcing the other variables to be small due to~\eqref{eq:sos_sum_condition}.

We then show that satisfying~\eqref{eq:sos_constraints} for every constraint follows from showing that
\begin{equation} \label{eq:choosing_the_solution_psdness}
\sum_{k=0}^n \binom{n}{k} y_k^N(k-2)\prod_{i=1}^t(k-r_i)^2 \geq 0
\end{equation}
for every choice of $t$ real variables $r_i$. We get this condition by observing similarities in the structure of the constraints and applying Theorem~\ref{thm:PSD_as_polynomial}, then expressing the polynomial in root form.\footnote{We show that the roots $r_i$ can be assumed to be real numbers.} If we set $y_1^N = 0$, the only negative term in the sum corresponds to $y_0^N$. Then, it is clear that we need at least $t+1$ non-zero variables $y_k^N$, otherwise the roots $r_i$ can be set such that the positive terms in~\eqref{eq:choosing_the_solution_psdness} vanish and the inequality is not satisfied. Therefore, we choose exactly $t+1$ of the $y_k^N$ to be strictly positive (and the rest 0 excluding $y_0^N$), and we distribute them ``far away'' from each other, so that no root can be placed such that the coefficient of two positive terms become small. To take this idea further, for one ``very small'' $k'$ (logarithmic in $n$), we set $y_{k'}^N$ positive and space out the rest evenly.

\paragraph{Proving that the solution is feasible.}

We show that~\eqref{eq:choosing_the_solution_psdness} holds for all possible $r_i$ with our chosen solution by analysing two cases. In the first case we assume that all of the roots $r_i$ are larger than $\log^3 n$. Then, we show that the ``small'' point $k'$ we chose is enough to satisfy the condition. In the complement case, we assume that there is at least one root $r_i$ that is smaller than $\log^3 n$. It follows that one of the evenly spaced points is ``far'' from any remaining root, and can be used to show that the condition is satisfied.


\subsection{Proof of Theorem~\ref{thm:knapsack_gap_main}} \label{sect:proof_of_knapsack_cover_thm}

We start by proving the claimed integrality gap. The defined solution has an objective value that is arbitrarily close to 1 whereas the optimal integral value is 2. Indeed, the objective value of the relaxation is (see Appendix \ref{app:SoS}):
$
\sum_{I\subseteq N} \y_I|I| = \frac{n}{n-1}(1+\varepsilon) + \varepsilon t  \stackrel{n\rightarrow \infty}{=} 1.
$


The remaining part of the Theorem~\ref{thm:knapsack_gap_main} follows by showing that the suggested solution satisfies~\eqref{eq:sos_sum_condition},~\eqref{eq:sos_variables} and~\eqref{eq:sos_constraints}. Note that~\eqref{eq:sos_sum_condition} is immediately satisfied by the definition of variables $\{y_I^N\}$, and~\eqref{eq:sos_variables} is satisfied since $y_I^N \geq 0$ and the rank one matrix $Z_I Z_I^{\top}$ is positive semidefinite for every $I \subseteq N$. It remains to prove that the condition~\eqref{eq:sos_constraints} is also satisfied for all the constraints~\eqref{eq:knapsack_constraint} and~\eqref{eq:knapsack_covering_constraint}. Note that the constraint~\eqref{eq:knapsack_covering_constraint} is not symmetric (one variable is missing and sets of variables of the same size do not play the same role with respect to this constraint). However, the following lemma shows how to solve this issue by reducing to the form~\eqref{matrixM} of Theorem \ref{thm:PSD_as_polynomial}.

\begin{lemma} \label{lemma:expression}
The constraint~\eqref{eq:sos_constraints} holds for both~\eqref{eq:knapsack_constraint} and~\eqref{eq:knapsack_covering_constraint} if the solution of Theorem~\ref{thm:knapsack_gap_main} satisfies
\begin{equation} \label{eq:sufficient_condition_KC}
\sum_{\emptyset \neq I \subseteq N}^{n}y_I^N \left(|I|-2\right) Z_I Z_I^{\top} \succeq \frac{n}{n-1} Z_{\emptyset} Z_{\emptyset}^{\top}
\end{equation}
\end{lemma}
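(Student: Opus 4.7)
The plan is to reduce the two SoS matrix inequalities corresponding to~\eqref{eq:knapsack_constraint} and~\eqref{eq:knapsack_covering_constraint} to the single symmetric condition~\eqref{eq:sufficient_condition_KC}. The common mechanism will be: (i) lower-bound the constraint-dependent coefficient $g_\ell(x_I)$ by $|I|-2$ for every $I\neq\emptyset$, and (ii) upper-bound the (potentially negative) contribution of the $I=\emptyset$ term by $\frac{n}{n-1}\,Z_\emptyset Z_\emptyset^\top$, which is exactly the right-hand side appearing in~\eqref{eq:sufficient_condition_KC}. Since for every $I$ the matrix $y_I^N Z_I Z_I^\top$ is PSD, replacing a coefficient by a smaller one only weakens a $\succeq$ inequality, so these two substitutions indeed reduce the required PSDness to~\eqref{eq:sufficient_condition_KC}.

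First I would treat the knapsack constraint~\eqref{eq:knapsack_constraint}, where $g(x_I)=|I|-\frac{n}{n-1}$. Peeling off the $I=\emptyset$ term turns the SoS requirement into $\sum_{I\neq\emptyset}\bigl(|I|-\frac{n}{n-1}\bigr)y_I^N Z_I Z_I^\top \succeq \frac{n}{n-1}\,y_\emptyset^N Z_\emptyset Z_\emptyset^\top$. From the definition~\eqref{eq:knapsack_gap_solution} one has $0\leq y_\emptyset^N\leq 1$ (the other $y_I^N$ are non-negative and for large enough $n$ their total mass stays below~$1$), so the right-hand side is dominated by $\frac{n}{n-1}\,Z_\emptyset Z_\emptyset^\top$. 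Since $\frac{n}{n-1}\leq 2$ yields $|I|-\frac{n}{n-1}\geq |I|-2$, the left-hand side is in turn bounded below by $\sum_{I\neq\emptyset}(|I|-2)\,y_I^N Z_I Z_I^\top$, and invoking~\eqref{eq:sufficient_condition_KC} closes this case.

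Next I would handle the covering constraints~\eqref{eq:knapsack_covering_constraint}, which are not symmetric: the fixed $N'$ with $|N'|=n-1$ singles out one coordinate. Here I would first note that for any permutation $\pi\in S_n$, conjugating by the induced permutation matrix $P_\pi$ on $\mathbb{R}^{\PS_t(N)}$ sends $Z_I Z_I^\top$ to $Z_{\pi(I)} Z_{\pi(I)}^\top$, and combined with the symmetry $y_I^N=y_{\pi(I)}^N$ this shows that the matrices associated with different $N'$ of size $n-1$ are unitarily equivalent; hence it suffices to verify PSDness for a single $N'$. Fixing such an $N'$, the coefficient is $g_{N'}(x_I)=|I\cap N'|-1$, and because $|N\setminus N'|=1$ we have $|I\cap N'|\geq |I|-1$, i.e., $g_{N'}(x_I)\geq |I|-2$. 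Repeating the previous argument---after splitting off $I=\emptyset$, whose contribution $y_\emptyset^N Z_\emptyset Z_\emptyset^\top$ is bounded by $Z_\emptyset Z_\emptyset^\top \preceq \frac{n}{n-1}\,Z_\emptyset Z_\emptyset^\top$---reduces the requirement to~\eqref{eq:sufficient_condition_KC}.

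I do not anticipate a real obstacle in this lemma: its entire content is a compact syntactic reduction that packages both constraint families into the single symmetric inequality~\eqref{eq:sufficient_condition_KC}, which is the object that will actually be analyzed via Theorem~\ref{thm:PSD_as_polynomial} in the subsequent sections. The only bookkeeping points are verifying $0\leq y_\emptyset^N\leq 1$ from~\eqref{eq:knapsack_gap_solution} and the unitary equivalence of the covering matrices; both are immediate from the definitions and the $S_n$-invariance of the solution.
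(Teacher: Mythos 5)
Your proposal is correct and follows essentially the same route as the paper: lower-bound each coefficient $g_\ell(x_I)$ by $|I|-2$ for $I\neq\emptyset$ (using $\frac{n}{n-1}\leq 2$ for the knapsack constraint and $|I\cap N'|\geq |I|-1$ for the cover constraints), bound the $I=\emptyset$ term via $y_\emptyset^N\leq 1\leq\frac{n}{n-1}$, and conclude from~\eqref{eq:sufficient_condition_KC}. The only cosmetic difference is that you make explicit the permutation-equivalence argument for reducing to a single cover constraint, which the paper dismisses with ``the other cases are similar.''
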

\begin{proof}
 We first show that this implies that~\eqref{eq:sos_constraints} holds for the constraint $\sum_{j =1}^{n} x_j \geq 1+1/(n-1)$.
Since for a large $n$, $y_I^N=0$ for $|I|=1$ and $y_{\emptyset}^N \leq 1$, the condition~\eqref{eq:sos_constraints} takes the following form
$$
\sum_{\emptyset \neq I  \subseteq N} y_I^N\left(|I|-\frac{n}{n-1}\right)Z_I Z_I^{\top} \succeq y_{\emptyset}^N\frac{n}{n-1} Z_{\emptyset} Z_{\emptyset}^{\top}
$$
which is implied if (\ref{eq:sufficient_condition_KC}) is satisfied (recall that $\y_I\geq 0$).
Next, we will show that~\eqref{eq:sufficient_condition_KC} also implies that~\eqref{eq:sos_constraints} is satisfied for the cover constraint $\sum_{j = 1}^{n-1} x_j \geq 1$ (the other cases are similar). For this constraint Condition~\eqref{eq:sos_constraints} can be written as
\begin{eqnarray*}
\sum_{\substack{n\notin I \subseteq N\\ I \neq \emptyset}} \y_I(|I|-1) Z_IZ_I^\top + \sum_{\substack{n\in I \subseteq N}} \y_I(|I|-2) Z_IZ_I^\top   - \y_{\emptyset} Z_{\emptyset}Z_{\emptyset}^\top  \\
\succeq  \sum_{\substack{I \subseteq N\\I \neq \emptyset}} \y_I(|I|-2) Z_IZ_I^\top  - \y_{\emptyset} Z_{\emptyset}Z_{\emptyset}^\top \succeq 0
\end{eqnarray*}
which is also implied if (\ref{eq:sufficient_condition_KC}) is satisfied.

\end{proof}

Now by Theorem~\ref{thm:PSD_as_polynomial}, Condition~\eqref{eq:sufficient_condition_KC}
holds if we have
$
\sum_{k=1}^n y_k^N (k-2) \binom{n}{k}G_h(k) \geq  \frac{n}{n-1} G_h(0)
$
for $h=0,1,\ldots,t$ and every univariate polynomial $G_h(k)$ of degree $2t$ such that $G_h(k) \geq 0$ for $k  \in [h-1,\ldots,n-h+1]$ and $G_h(k)=0$ for $k  \in \{0,\ldots,h-1\}\cup\{n-h+1,\ldots,n\}$.

Note that the only nontrivial case is for $h=0$, since otherwise the above condition is immediately satisfied. Indeed, for $h > 0$, we have that $G_h(0) = 0$ and the only remaining terms in the sum are non-negative. Thus in order to complete the proof of Theorem~\ref{thm:knapsack_gap_main} it is enough to show that the following is satisfied
\begin{equation}
\sum_{k=1}^n y_k^N (k-2) \binom{n}{k}P^2(k) \geq  \frac{n}{n-1} P^2(0) \quad \forall P: \text{deg}(P)\leq t
\label{eq:knapsack_condition_in_poly_form}
\end{equation}
The following lemma (proved in Section~\ref{sect:proof_of_lemma_three_facts}) further reduces the interesting cases.
\begin{lemma} \label{lemma:three_facts}
In order to prove that Solution~\eqref{eq:knapsack_gap_solution} satisfies (\ref{eq:knapsack_condition_in_poly_form}) it is sufficient to prove that \eqref{eq:knapsack_gap_solution} satisfies (\ref{eq:knapsack_condition_in_poly_form}) for polynomials $P(x)$ with the following properties:
\begin{enumerate}
 \item[(a)] all the roots $r_1,..., r_t$ of $P(x)=0$ are real numbers,
 \item[(b)] all the roots $r_1,..., r_t$ of $P(x)=0$  are in the range, $1 \leq r_j \leq n$ for all $j = 1,\ldots,t$,
 \item[(c)] the degree of $P(x)$ is exactly $t$.
\end{enumerate}
\end{lemma}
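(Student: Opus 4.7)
The plan is to proceed by contraposition: given a polynomial $P$ of degree at most $t$ for which~(\ref{eq:knapsack_condition_in_poly_form}) fails, I would construct a polynomial $P'$ of degree exactly $t$ with all real roots in $[1,n]$ for which the inequality also fails. This is achieved by three successive modifications, each producing a new polynomial with $\mathrm{LHS}$ no larger and $\mathrm{RHS}$ no smaller than the previous one, so that the violation is preserved (or strengthened). The first step handles (c): if $\deg P = s < t$, I replace $P$ by $P_\varepsilon(x) := P(x) + \varepsilon x^t$, which has degree exactly $t$, and use the continuity of both sides in $\varepsilon$ to preserve a strict violation for all small enough $\varepsilon > 0$.

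For (a), I would factor $P(x) = Q(x) \prod_j \bigl((x-a_j)^2 + b_j^2\bigr)$, where $Q$ has only real roots and the product ranges over the complex conjugate root pairs $a_j \pm \mathrm{i} b_j$ with $b_j > 0$, and substitute each factor $(x-a_j)^2 + b_j^2$ by $(x - \sqrt{a_j^2 + b_j^2})^2$. The key elementary identity
\[
\bigl(k - \sqrt{a^2+b^2}\bigr)^2 - \bigl((k-a)^2 + b^2\bigr) = 2k\bigl(a - \sqrt{a^2+b^2}\bigr) \le 0 \quad \text{for } k \ge 0,
\]
with equality at $k = 0$, gives $(P')^2(k) \le P^2(k)$ for $k \ge 1$ and $(P')^2(0) = P^2(0)$. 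Since the coefficients $c_k := y_k^N(k-2)\binom{n}{k}$ are non-negative for $k \ge 1$ (using $y_1^N = 0$ and $k-2 \ge 0$ for $k \ge 2$), $\mathrm{LHS}$ weakly decreases and $\mathrm{RHS}$ is exactly preserved; iterating over complex pairs yields a real-rooted $P'$ that still violates the inequality.

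For (b), I would first reflect each non-positive root $r \le 0$ to $|r|$: from the identity $(k+|r|)^2 - (k-|r|)^2 = 4k|r| \ge 0$, this preserves $P^2(0)$ and decreases $P^2(k)$ for every $k \ge 1$; any zero root would make $\mathrm{RHS} = 0$ and render the inequality trivial, so I may assume all roots are strictly positive. Now fix one root $r > 0$ and write $P(x) = (x-r)Q(x)$. The ratio $\mathrm{LHS}(P)/\mathrm{RHS}(P)$ equals $\sum_k \gamma_k (ku - 1)^2$ with $u := 1/r$ and $\gamma_k := \tfrac{n-1}{n}\, c_k Q^2(k)/Q^2(0) \ge 0$, supported on $S \subseteq [1,n]$ (the set of $k \ge 1$ with $y_k^N > 0$). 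As a convex quadratic in $u$, it is minimized at $u^* = \sum_k \gamma_k k / \sum_k \gamma_k k^2$, and the corresponding $r^* = \sum_k \gamma_k k^2 / \sum_k \gamma_k k$ is a weighted average of points of $S$ and therefore lies in $[1,n]$. Replacing $r$ by $r^*$ can only decrease the ratio, so it turns a violating $P$ into a violating $P'$. Iterating one root at a time places every root inside $[1,n]$.

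The main subtle point is the choice of substitution in (a): the naive replacement $(x-a)^2 + b^2 \mapsto (x-a)^2$ shrinks $P^2(k)$ for $k \ge 1$ but also $P^2(0)$, so $\mathrm{RHS}$ drops as well and the chain of implications breaks; substituting $(x - \sqrt{a^2+b^2})^2$ is the natural choice that leaves the value at $k = 0$ unchanged. A second subtlety in (b) is that the map $r \mapsto 1/r$ is discontinuous at $0$, so moving a negative root directly to the positive optimizer $r^* > 0$ cannot be realized as a continuous motion in $r$; the preliminary reflection step reduces everything to the positive half-line, where the convex-quadratic-in-$u$ analysis applies cleanly and every root ends up in $[1,n]$ as required.
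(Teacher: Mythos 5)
Your proof is correct, and its engine is the same as the paper's: every root modification is chosen to fix $P^2(0)$ exactly (hence the right-hand side of \eqref{eq:knapsack_condition_in_poly_form}) while decreasing $P^2(k)$ pointwise for $k\ge 1$, which decreases the left-hand side because the coefficients $y_k^N(k-2)\binom{n}{k}$ are non-negative. Step (a) is literally the paper's step: a conjugate pair $a\pm bi$ is replaced by the double real root $\sqrt{a^2+b^2}$, and your identity is the same computation the paper does after normalizing by $\prod_j r_j^2$. You genuinely diverge on (b) and (c). For (b) the paper just clamps: a root in $(0,1)$ is moved to $1$ and a root above $n$ is moved to $n$, using monotonicity of $\left(\frac{r-k}{r}\right)^2$ in $r$; you instead minimize the ratio $\mathrm{LHS}/\mathrm{RHS}$ over each root and note that the minimizer $r^*=\sum_k\gamma_k k^2/\sum_k\gamma_k k$ is a weighted average of support points and so lands in $[1,n]$ automatically. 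This is a nice argument that also explains \emph{why} only roots in the convex hull of the support matter, but it quietly uses $\sum_k\gamma_k k>0$; this holds because $Q$ has degree $t-1$ while the support has $t+1$ points, and is worth stating. Your preliminary reflection of negative roots is actually unnecessary: the quadratic in $u=1/r$ is minimized at $u^*$ over all of $\mathbb{R}$, so $\mathrm{ratio}(1/r)\ge\mathrm{ratio}(u^*)$ already covers $r<0$ (and $r=0$ is excluded since it forces $\mathrm{RHS}=0$). For (c) the paper appends $t-s$ extra roots at $n$, multiplying each term by $\left(\frac{n-k}{n}\right)^{2(t-s)}\le 1$; this is more economical than your perturbation-plus-continuity step and, done last, does not disturb properties (a) and (b), whereas your ordering requires re-running (a) and (b) on the perturbed polynomial. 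Both routes are valid.
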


Next we show that Solution~\eqref{eq:knapsack_gap_solution} satisfies (\ref{eq:knapsack_condition_in_poly_form}) and that there exists an $\varepsilon = o(t^{-1})$ as claimed.

The fundamental theorem of algebra states that any univariate polynomial of degree $t$ has exactly $t$ complex roots. According to Lemma~\ref{lemma:three_facts} we can focus to polynomial with $t$ real roots. We prove that the suggested solution satisfies (\ref{eq:knapsack_condition_in_poly_form}) by expressing the generic univariate polynomial $P(k)$ using its roots $r_1, ..., r_t$, so that (\ref{eq:knapsack_condition_in_poly_form}) becomes

\begin{equation}\label{eq:silp}
\sum_{k = 1}^n \binom{n}{k} y_k^N\left(k-2 \right) \prod_{i=1}^t (r_i-k)^2  \geq  \left(1+\frac{1}{n-1} \right) \prod_{i=1}^t r_i^2
\end{equation}

To show that (\ref{eq:silp}) is satisfied we separate two cases: when all of the roots of the polynomial are greater or equal to a fixed treshold $\alpha = \log^3 n$ and when at least one root is smaller than this treshold. In order to simplify the computations we denote $\beta = \lfloor \log n \rfloor$.
\begin{enumerate}
 \item $r_j \geq \alpha$ for all $j$. It is sufficient to show that the left--hand side term in (\ref{eq:silp}) corresponding to $k = \beta$ satisfies
$$
\binom{n}{\beta} y_{\beta}^N\left(\beta-2\right)\prod_{i=1}^t (r_i-\beta)^2 \geq \left(1+ \frac{1}{n-1} \right) \prod_{i=1}^t r_i^2
$$
Replacing the variables with the values we get
\begin{align*}
\frac{n}{n-1}\frac{1+\varepsilon}{\beta}\left(\beta-2\right)\prod_{i=1}^t (r_i-\beta)^2  \geq   \frac{n}{n-1} \prod_{i=1}^t r_i^2 \\
\Longleftrightarrow 1+\varepsilon \geq \prod_{i=1}^t\left(\frac{r_i}{r_i-\beta}\right)^2 \frac{1}{1-2\beta^{-1}}
\end{align*}
Since by Theorem~\ref{thm:PSD_as_polynomial} and assumption, all roots $r_j$ satisfy $\alpha \leq r_j \leq n$. Since $\frac{r_i}{r_i - \beta} \leq \frac{\alpha}{\alpha-\beta}$ it is sufficient that it holds
$
1 + \varepsilon \geq \frac{1}{1-2\beta^{-1}} \left(\frac{\alpha}{\alpha-\beta}\right)^{2t}
$.

 \item There is at least one root $r_j$ such that $r_j < \alpha$. It can be shown by straightforward induction on the number of roots that if for at least one $j$, $r_j < \alpha$, then there exists a point $u = l\frac{n}{t}$, $l = 1,...,t$ such that $\binom{n}{u} y_u> 0$ and
$
|u - r_i| \geq \frac{n}{2t}
$
for all $i = 1,...,t$. Let $u$ be such a point. It is sufficient to show that we can satisfy
$
\binom{n}{u} y_u^N\left(u-2\right)\prod_{i=1}^t (r_i-u)^2 \geq \frac{n}{n-1}\prod_{i=1}^t r_i^2
$.

We have
$
\binom{n}{u} y_u^N = \frac{\varepsilon}{u}
$
and the estimates
$
u-2 \geq \frac{u}{2}, 
(r_i-u)^2 \geq \frac{n^2}{(2t)^2}, 
\prod_{i=1}^t r_i \leq n^{t-1}\alpha
$.
Substituting these we get the condition
$
\frac{\varepsilon}{2}\left(\frac{n}{2t}\right)^{2t} \geq \frac{n}{n-1} n^{2t-2}\alpha^2
$
which gives us the requirement that
$
\varepsilon \geq  \frac{2 \alpha^2}{n^2} (2t)^{2t} \frac{n}{n-1}
$.
\end{enumerate}
These two cases suggest that we fix $\varepsilon$ as
$$
\varepsilon = \max\left\{\frac{1}{1-2\beta^{-1}}\left( 1-\frac{\beta}{\alpha}\right)^{-2t}-1,  \frac{n}{n-1}\frac{2 \alpha^2}{n^2} (2t)^{2t} \right\}
$$
The proof has now been reduced to showing that with this choice of $\varepsilon$ we have $\varepsilon t \rightarrow 0$, i.e., $\varepsilon = o(t^{-1})$.
Assume $\varepsilon = \frac{1}{1-2\beta^{-1}}\left( 1-\frac{\beta}{\alpha}\right)^{-2t}-1$. Then
$
\begin{array}{l}
\varepsilon t = t\left(\frac{1}{1-2\beta^{-1}}\left( 1-\frac{\beta}{\alpha}\right)^{-2t}-1\right) \leq t\left(\frac{1}{1-2\beta^{-1}}e^{4t\frac{\beta}{\alpha}}-1\right) \\
\end{array}
$,
when $\beta/\alpha \leq 1/2$, using the estimate $1-x \geq e^{-2x} \Rightarrow (1-x)^{-2t} \leq e^{4xt}$ which holds when $x \leq 1/2$. Furthermore, the same estimate yields $e^x - 1 \leq 2x$ when $x\leq 1/2$. Hence, we have the bound
$$
\varepsilon t \leq t \frac{1}{1-2\beta^{-1}} \cdot 8t\frac{\beta}{\alpha} + t \left(\frac{1}{1-2\beta^{-1}}-1\right) = \frac{8}{1-2\beta^{-1}} \cdot t^2\frac{\beta}{\alpha} + \frac{2 t\beta^{-1}}{1-2\beta^{-1}}
$$
The right-hand side goes to 0 if
$
\frac{t^2\beta}{\alpha} \rightarrow 0 \textrm{ and } \frac{t}{\beta} \rightarrow 0
$
as $n\rightarrow \infty$. This is clearly the case for $t \leq \log^{1-\delta} n$, for any $\delta >0$.

Next, assume $\varepsilon =  \frac{n}{n-1} \frac{2\alpha^2}{n^2} (2t)^{2t}$. Then
$
\varepsilon t = t \frac{n}{n-1} \frac{2\alpha^2}{n^2} (2t)^{2t},
$
which immediately yields the condition on $\alpha$ and $t$ that we need
$
\frac{t\alpha^2}{n^2} (2t)^{2t} \rightarrow 0
$
as $n \rightarrow \infty$. Substituting $\alpha = \log^3 n$ and $t = \log^{1-\delta}n$, for any $\delta >0$, allows us to write this as
$
\frac{t\alpha^2}{n^2} (2t)^{2t} = \log^{1-\delta}n\frac{\log^6 n}{n^2}(2\log^{1-\delta}n)^{2\log^{1-\delta}n}
$.
By a change of variables of the form $w = \log^{1-\delta}n$ we get
$$
\frac{w^{2w + \frac{7-\delta}{1-\delta}}2^{2w}}{e^{2w^{\frac{1}{1-\delta}}}} \leq \frac{w^{4w+ \frac{7-\delta}{1-\delta}}}{e^{2w^{\frac{1}{1-\delta}}}} = \frac{e^{(4w+ \frac{7-\delta}{1-\delta})\log w}}{e^{2w^{\frac{1}{1-\delta}}}} = e^{(4w+ \frac{7-\delta}{1-\delta})\log w-{2w^{\frac{1}{1-\delta}}}}
$$
which tends to 0 as $n \rightarrow \infty$.

\subsection{Proof of Lemma~\ref{lemma:three_facts}} \label{sect:proof_of_lemma_three_facts}
\textbf{Lemma \ref{lemma:three_facts}.}\textit{
In order to prove that Solution~\eqref{eq:knapsack_gap_solution} satisfies (\ref{eq:knapsack_condition_in_poly_form}) it is sufficient to prove that \eqref{eq:knapsack_gap_solution} satisfies (\ref{eq:knapsack_condition_in_poly_form}) for polynomials $P(x)$ with the following properties:
\begin{enumerate}
 \item[(a)] all the roots $r_1,..., r_t$ of $P(x)=0$ are real numbers,
 \item[(b)] all the roots $r_1,..., r_t$ of $P(x)=0$  are in the range, $1 \leq r_j \leq n$ for all $j = 1,\ldots,t$,
 \item[(c)] the degree of $P(x)$ is exactly $t$.
\end{enumerate}
}
\begin{proof}

First notice that (\ref{eq:knapsack_condition_in_poly_form}) is equivalent to
\begin{equation}
\label{eq:sufficient_condition_KC_root_form}
\sum_{k = 1}^n y_k^N \binom{n}{k} \left(k-2 \right) \prod_{j=1}^t \left(\frac{r_j-k}{r_j}\right)^2  \geq  \left(1+\frac{1}{n-1} \right)
\end{equation}
where for the fixed $n$ the right-hand side is constant.
\begin{enumerate}
 \item [(a)] Let $P(k)$ be the univariate polynomial with $2q$ complex roots (complex roots appear in conjugate pairs) i.e. $r_{2j-1} = a_j + b_ji$, $r_{2j} = a_j-b_ji$ for $j = 1,...,q$ and the rest real roots. Let $P'(k)$ be the polynomial with all real roots such that $r'_{2j-1} = r'_{2j} = \sqrt{a_{2j}^2 + b_{2j}^2}$ for $j=1,...,q$ and $r'_j = r_j, j>2q$.

For any $k\in N$ and $j\in[t]$, a simple calculation shows that
$$
\left(\frac{r_{2j-1}-k}{r_{2j-1}}\right)^2\left(\frac{r_{2j}-k}{r_{2j}}\right)^2 \geq \left(\frac{r'_{2j-1}-k}{r'_{2j-1}}\right)^2\left(\frac{r'_{2j}-k}{r'_{2j}}\right)^2
$$
Hence,

$$
\sum_{k = 1}^n  y_k^N \binom{n}{k}\left(k-2 \right)   \prod_{j=1}^t \left(\frac{r_j-k}{r_j}\right)^2 \geq
\sum_{k = 1}^n   y_k^N \binom{n}{k} \left(k-2 \right)  \prod_{j=1}^t \left(\frac{r'_j-k}{r'_j}\right)^2\\
$$

\item [(b)] Let $P(k)$ be the univariate polynomial with all positive roots but one i.e. $r_1 = -a$, for $a>0$. Let $P'(k)$ be the univariate polynomial with all positive roots such that $r'_{1} = a$ and $r'_j = r_j, j>1$.
Since for any $k\in N$
$$
 \left(\frac{-a-k}{-a}\right)^2\geq  \left(\frac{a-k}{a}\right)^2
$$

and follows that,
$$
\sum_{k = 1}^n y_k^N \binom{n}{k}\left(k-2 \right)   \prod_{j=1}^t \left(\frac{r_j-k}{r_j}\right)^2 \geq
\sum_{k = 1}^n   y_k^N \binom{n}{k} \left(k-2 \right)  \prod_{j=1}^t \left(\frac{r'_j-k}{r'_j}\right)^2\\
$$

Now, let $P(k)$ be the univariate polynomial with $r_1 \in (0,1)$ and $r_j\geq 1$, for  $j>1$. Let $P'(k)$ be the univariate polynomial with $r_1=1$ and $r'_j = r_j, j>1$.

Since for any $k\in N$
$$
 \left(\frac{r_1-k}{r_1}\right)^2\geq  \left(\frac{1-k}{1}\right)^2
$$

and follows that,
$$
\sum_{k = 1}^n  y_k^N \binom{n}{k}\left(k-2 \right)   \prod_{j=1}^t \left(\frac{r_j-k}{r_j}\right)^2 \geq
\sum_{k = 1}^n   y_k^N \binom{n}{k} \left(k-2 \right)  \prod_{j=1}^t \left(\frac{r'_j-k}{r'_j}\right)^2\\
$$

Next, let $P(k)$ be the univariate polynomial with $r_t =an$ for $a>1$ and $r_j \in [1,n]$, for  $j\neq t$. Let $P'(k)$ be the univariate polynomial with $r_t=n$ and $r'_j = r_j, j\neq t$.

Since for any $k\in N$
$$
 \left(\frac{an-k}{an}\right)^2\geq  \left(\frac{n-k}{n}\right)^2
$$

and follows that,
$$
\sum_{k = 1}^n  y_k^N \binom{n}{k}\left(k-2 \right)   \prod_{j=1}^t \left(\frac{r_j-k}{r_j}\right)^2 \geq
\sum_{k = 1}^n   y_k^N \binom{n}{k} \left(k-2 \right)  \prod_{j=1}^t \left(\frac{r'_j-k}{r'_j}\right)^2\\
$$

\item[(c)]  Let $P(k)$ be the univariate polynomial with degree $s<t$ with all real roots. Let $P'(k)$ be the polynomial of degree $t$ with all real roots such that $r'_j = r_j, j\leq s$ and $r'_j=n$ for $s<j\leq t$

For any $k\in N$, we have

$$
1 \geq \left(\frac{n-k}{n}\right)^2
$$
Hence,
$$
\left(\frac{r_1-k}{r_1}\right)^2\cdots \left(\frac{r_{s}-k}{r_{s}}\right)^2 \geq \left(\frac{r_1-k}{r_1}\right)^2\cdots \left(\frac{r_{s}-k}{r_{s}}\right)^2 \left(\frac{n-k}{n}\right)^{2(t-s)}
$$
and finally
$$
\sum_{k = 1}^n  y_k^N \binom{n}{k}\left(k-2 \right)   \prod_{j=1}^t \left(\frac{r_j-k}{r_j}\right)^2 \geq
\sum_{k = 1}^n   y_k^N \binom{n}{k} \left(k-2 \right)  \prod_{j=1}^t \left(\frac{r'_j-k}{r'_j}\right)^2\\
$$

\end{enumerate}
\end{proof}

\subsection{Further Results}\label{sect:further}
%
In a recent paper~\cite{KurpiszLM15b} the authors characterize the class of the initial 0/1 relaxations that are \emph{maximally hard} for the SoS hierarchy. Here, maximally hard means those relaxations that still have an integrality gap even after $n-1$ rounds of the SoS hierarchy.\footnote{Recall that at level $n$ the integrality gap vanishes.} An illustrative natural member of this class is given by the simple LP relaxation for the \textsc{Min Knapsack} problem, i.e.
$$(LP) ~ \min \left\{ \sum_{j =1}^{n} x_j : \\ \sum_{j =1}^{n} x_j \geq P, x \in [0,1]^n \right\}$$
In~\cite{KurpiszLM15b} it is shown that at level $n-1$ the integrality gap is $k$, for any $k\geq 2$ if and only if $P= \Theta(k) \cdot 2^{2n}$. A natural question is to understand if the SoS hierarchy is able to reduce the gap when $P$ is ``small''. {\col
{\col

This problem, for $P=1/2$, was considered by Cook and Dash \cite{cook2001matrix} as an
example where the Lovasz-Schrijver hierarchy rank is $n$. Laurent~\cite{Laurent03} showed that the Sherali-Adams hierarchy rank is
also equal to $n$ and raised the open question to find the rank for the Lasserre hierarchy.
She also showed that when $n = 2$, the Lasserre relaxation has an integrality gap at level 1, but leaves open whether or not this happens
at level $n - 1$ for a general $n$.
 In~\cite{KurpiszLM15b} it is ruled out the possibility that the Lasserre/SoS rank is $n$ for $n\geq 3$.

The following lemma provides a feasible solution for $\SoS_t(LP)$ with integrality gap arbitrarily close to $P$ when $t=\Omega(\log^{1-\eps} n)$ and for any $P<1$. The proof is omitted since it is similar to the proof of Theorem \ref{thm:knapsack_gap_main}.
}
\begin{theorem}
\label{thm:knapsack_gap_P}
For any $\delta >0$ and sufficiently large $n'$, let $t = \lfloor \log^{1-\delta} n'\rfloor$, $n=\lfloor \frac{n'}{t} \rfloor t$ and $\epsilon =o(t^{-1})$. Then the following solution is feasible for $\SoS_t(LP^+)$ with integrality gap arbitrarily close to $P$.
\begin{eqnarray}\label{eq:knapsack_gap_solution}
\y_I= \binom{n}{|I|}^{-1} \cdot
\left\{ \begin{array}{ll}
\frac{(1 + \epsilon)}{P\lfloor \log n\rfloor} & \text{for }|I|=\lfloor \log n\rfloor\\
\frac{\epsilon t}{jn}  & \text{for }|I|= j\frac{n}{t} \text{ and } j\in[t]\\
1-\sum_{\emptyset \neq I \subseteq N} \y_I & \text{for }I=\emptyset\\
0 & \text{otherwise}
\end{array}
\right.
\end{eqnarray}
\end{theorem}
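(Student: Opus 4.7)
The plan is to adapt the proof of Theorem~\ref{thm:knapsack_gap_main} almost verbatim, replacing the constants $2$ and $n/(n-1)$ of that argument by the new constant $P$; since $P<1$, every such replacement only weakens the resulting univariate inequalities, so the same asymptotic analysis applies. First I would compute the objective value $\sum_I \y_I|I|= (1+\epsilon)/P + \epsilon t$, which under $\epsilon=o(t^{-1})$ tends to $1/P$ as $n\to\infty$, matching the claimed integrality gap. Conditions~\eqref{eq:sos_sum_condition} and~\eqref{eq:sos_variables} hold immediately by construction and by $\y_I\geq 0$ respectively. Since with $P<1$ and unit profits the only cover set is $A=\emptyset$, the resulting cover inequality is (up to scaling) the main knapsack constraint $\sum_j x_j\geq P$; hence verifying~\eqref{eq:sos_constraints} reduces to this single constraint.

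I would then apply Theorem~\ref{thm:PSD_as_polynomial} to the matrix $\sum_I \y_I(|I|-P) Z_I Z_I^\top$ using $z^N_k = y^N_k(k-P)$. The resulting requirements are $\sum_{k=h}^{n-h} y^N_k(k-P)\binom{n}{k}G_h(k)\geq 0$ for $h=0,\ldots,t$ and every $G_h\in\mathcal{S}_t$. For each $h\geq 1$ all summands are nonnegative since $k\geq 1>P$ and $G_h(k)\geq 0$ on $[h-1,n-h+1]$, so these cases are automatic. For $h=0$, setting $G_0(k)=q(k)^2$ for arbitrary $q\in\R[x]_t$, separating the $k=0$ term and using $y^N_0\leq 1$ reduces the PSD condition to
\[
\sum_{k=1}^n y^N_k \binom{n}{k}(k-P)\, q(k)^2 \;\geq\; P\, q(0)^2.
\]

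Next I would invoke the direct analog of Lemma~\ref{lemma:three_facts}: all four of its monotonicity reductions act term-by-term in $k$ and never use the specific right-hand side, so they carry over without change and let me restrict to $q$ of degree exactly $t$ with real roots $r_1,\dots,r_t\in[1,n]$. Writing the inequality in root form, with $\alpha = \log^3 n$ and $\beta = \lfloor\log n\rfloor$, I would replay the two-case argument of Section~\ref{sect:proof_of_knapsack_cover_thm}. In Case~1 (all $r_i\geq\alpha$) the $k=\beta$ term yields the condition $(1+\epsilon)\bigl(1-P/\beta\bigr)\prod_i(1-\beta/r_i)^2\geq P^2$, which is strictly easier than the $\geq 1$ version in Theorem~\ref{thm:knapsack_gap_main} since $P^2<1$. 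In Case~2 (some $r_i<\alpha$) the same inductive pigeonhole produces a point $u=\ell n/t$ at distance $\geq n/(2t)$ from every root, and the $k=u$ contribution gives the condition $\epsilon\geq (2P\alpha^2/n^2)(2t)^{2t}$, again loosened by the factor $P<1$ compared with Theorem~\ref{thm:knapsack_gap_main}.

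The main obstacle is modest: it amounts to checking that Lemma~\ref{lemma:three_facts} is insensitive to the right-hand side constant (it is, because its pointwise inequalities depend only on individual $r_i$, $k$, and $n$) and that the $\epsilon=o(t^{-1})$ bookkeeping of Section~\ref{sect:proof_of_knapsack_cover_thm} still succeeds. Since both derived $\epsilon$-conditions are strictly weaker than their Theorem~\ref{thm:knapsack_gap_main} counterparts under $P<1$, the same choice $\epsilon = \max\{\text{Case 1 bound},\text{Case 2 bound}\}$ and the same asymptotic estimates, valid for $t\leq\log^{1-\delta} n$ and $\alpha=\log^3 n$, yield $\epsilon t\to 0$ exactly as in Section~\ref{sect:proof_of_knapsack_cover_thm}, completing the proof.
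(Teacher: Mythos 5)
Your proposal follows exactly the route the paper intends: Theorem~\ref{thm:knapsack_gap_P} is stated with its proof omitted as ``similar to the proof of Theorem~\ref{thm:knapsack_gap_main}'', and your adaptation --- Theorem~\ref{thm:PSD_as_polynomial} with $z^N_k=y^N_k(k-P)$, dismissing $h\geq 1$, reducing $h=0$ to squares $q(k)^2$ with $\deg q\leq t$, the analogue of Lemma~\ref{lemma:three_facts}, and the same two-case root analysis with $\alpha=\log^3 n$, $\beta=\lfloor\log n\rfloor$ --- is precisely that adaptation. The PSD verification is sound: the pointwise reductions are indeed insensitive to replacing $2$ by $P$ and the right-hand side by $P\,q(0)^2$, and the $\epsilon$-bookkeeping is strictly easier since $P<1$.

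Two loose ends you should not gloss over. First, the objective value: with the solution as literally printed, $\sum_I \y_I|I|=(1+\epsilon)/P+\epsilon t\to 1/P>1$, which \emph{exceeds} the integral optimum $1$ and hence exhibits no gap; writing that $1/P$ ``matches the claimed integrality gap'' of $P$ cannot stand. By analogy with Theorem~\ref{thm:knapsack_gap_main}, where the top entry is $(\text{demand})\cdot(1+\epsilon)/\lfloor\log n\rfloor$, the intended entry here is $(1+\epsilon)P/\lfloor\log n\rfloor$, giving value $\to P$ and gap $1/P$ (``arbitrarily close to $P$'' in the fractional-over-integral convention); your Case~1 line then becomes $(1+\epsilon)(1-P/\beta)\geq\prod_i\bigl(r_i/(r_i-\beta)\bigr)^2$, still no harder than in Theorem~\ref{thm:knapsack_gap_main}, and Case~2 is unchanged. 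You should fix the constant and redo that one substitution rather than accept the mismatch. Second, your justification for checking only one constraint is wrong: for unit profits and $P<1$, the $A=\emptyset$ cover inequality trims $p_j^{\emptyset}=\min\{1,P\}=P$ and reads $\sum_j Px_j\geq P$, i.e.\ $\sum_j x_j\geq 1$ --- a \emph{stronger} constraint under which no gap below $1$ is possible. The formulation in Section~\ref{sect:further} is the plain $(LP)$ with the single constraint $\sum_j x_j\geq P$ and no cover inequalities (the ``$LP^+$'' in the statement is a typo), so verifying that single constraint, as you do, is the right move --- but for that reason, not yours.
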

}
{\col


\section{Proof of Theorem~\ref{thm:PSD_as_polynomial}}\label{sect:th1}

Theorem~\ref{thm:PSD_as_polynomial} is actually a corollary of a stronger statement (see Theorem~\ref{thm:PSD_as_polynomial_full} below) that provides \emph{necessary and sufficient} conditions for the matrix \eqref{matrixM} being positive-semidefinite. 

Theorem~\ref{thm:PSD_as_polynomial_full} uses a special family of polynomials $G_{h}(k)\in \R[k]$ whose definition is deferred to a later section (see Definition \ref{def:G(k)} in Section~\ref{sect:proof_full}). We postpone the definition because it will become natural in the flow of the proof of Theorem~\ref{thm:PSD_as_polynomial_full}.
%
Here we remark that the polynomials $G_h(k)$ of Definition \ref{def:G(k)} satisfy the conditions \eqref{eq:degree}, \eqref{eq:zeros} and \eqref{eq:geq0} of Theorem~\ref{thm:PSD_as_polynomial} (as shown in Lemma~\ref{lemma:G_h} to follow). 
%

\begin{theorem} \label{thm:PSD_as_polynomial_full}
Let $z^N_k \in \mathbb{R}$ for $k \in \{0,
\ldots,n\}$. Then for any $t\in N$ the following matrix is positive-semidefinite
\begin{equation}\label{matrixM_full}
 \sum_{k = 0}^n z^N_k \sum_{\substack{I \subseteq N \\ |I| = k}} Z_IZ_I^\top  \qquad (\text{where } Z_I \in \mathbb{R}^{\PS_{t}(N)})
\end{equation}
if and only if
\begin{equation}\label{eq:sym_psd_cond_full}
\sum_{k=0}^n z^N_k \binom{n}{k} G_h(k) \geq 0  \qquad \text{ for } h \in \{0,\ldots,t\}
\end{equation}
for every univariate polynomial $G_h(x)\in\mathbb{R}[x]$ of degree at most $2t$ as defined in Definition~\ref{def:G(k)}.
\end{theorem}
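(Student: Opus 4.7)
The plan is to exploit the $S_n$-symmetry of the matrix in~\eqref{matrixM_full} to reduce positive semidefiniteness to nonnegativity of a family of Rayleigh-quotient-type quantities indexed by $h \in \{0,\ldots,t\}$, each of which turns out to have the form $\sum_k z^N_k \binom{n}{k} G_h(k)$ for a polynomial $G_h$ as in Definition~\ref{def:G(k)}.

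First I would observe that each block $\sum_{|I|=k} Z_I Z_I^\top$ is invariant under the natural action of $S_n$ on $\R^{\PS_t(N)}$ that permutes indices, hence $M := \sum_k z^N_k \sum_{|I|=k} Z_I Z_I^\top$ commutes with that action. This causes $\R^{\PS_t(N)}$ to decompose as a direct sum of $S_n$-isotypic components indexed by $h \in \{0,\ldots,t\}$ (these are the usual components associated to the partitions $(n-h,h)$). By $S_n$-equivariance, $M \succeq 0$ is equivalent to nonnegativity of its restriction to each component, and by Schur-type reasoning it further suffices to check $v^\top M v \geq 0$ on a rich enough family of symmetric test vectors, one family per value of $h$.

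Next I would write down, for each $h$, a parametric family of test vectors $v$ whose entry at $J \in \PS_t(N)$ depends only on the intersection pattern of $J$ with two reference sets of size $h$ and on $|J|$, modulated by an arbitrary auxiliary polynomial. Expanding $v^\top M v$ and grouping the sets $I$ of size $k$ by the statistics of their intersection with these reference sets, the bilinear form collapses to the scalar expression
$$
v^\top M v \;=\; C_h \cdot \sum_{k=0}^n z^N_k \binom{n}{k}\, G_h(k),
$$
with $C_h > 0$ depending only on $n,t,h$, and with $G_h(k)$ a polynomial of degree at most $2t$ in $k$. The polynomial $G_h$ produced this way is exactly the one described in Definition~\ref{def:G(k)}: its degree is $2t$ because each copy of the test vector contributes degree $t$; the zeros on $\{0,\ldots,h-1\}\cup\{n-h+1,\ldots,n\}$ appear as falling-factorial factors $\fallfac{k}{h}$ and $\fallfac{(n-k)}{h}$ forced by the requirement that $I$ must simultaneously contain both $h$-element reference sets (resp.\ leave room for their complements); and the remaining factor is a square in the free polynomial parameter, yielding nonnegativity on the interval $[h-1,n-h+1]$.

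Finally I would close both directions of the equivalence. The sufficiency direction uses the fact that by varying the free parameter one obtains enough test vectors to span the $h$-th isotypic component, so \eqref{eq:sym_psd_cond_full} forces nonnegativity on every component and hence $M \succeq 0$. The necessity direction is immediate: for any admissible $G_h$ one exhibits the matching test vector, and $M \succeq 0$ yields the scalar inequality. The main obstacle I expect is the combinatorial bookkeeping in identifying the polynomial $G_h$ explicitly from $v^\top M v$ and, critically, verifying that the parametrization of test vectors matches exactly the class of polynomials in Definition~\ref{def:G(k)} --- neither too small (otherwise sufficiency fails) nor too large (otherwise some "necessary" conditions are vacuous); this is precisely the point where the paper's idea of testing a whole class of candidate forms rather than diagonalizing exactly pays off.
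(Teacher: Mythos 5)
Your strategy is the paper's: use the $S_n$-invariance of $M$ to reduce positive semidefiniteness to nonnegativity of Rayleigh quotients on structured test vectors indexed by $h\in\{0,\dots,t\}$, and show each quotient equals $C_h\sum_k z^N_k\binom{n}{k}G_h(k)$. The necessity direction is fine. The genuine gap is in sufficiency: you justify the reduction by saying the test vectors ``span the $h$-th isotypic component,'' but spanning is not a valid criterion --- if $v_1,\dots,v_m$ span a subspace $V$ and $v_i^\top M v_i\ge 0$ for each $i$, it does not follow that $M$ is PSD on $V$ (take $M=\mathrm{diag}(1,-1)$ with $v_{1,2}=(1,\pm\tfrac12)$). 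To close the argument you need one of two things. Either the full Schur-lemma statement: on the $(n-h,h)$-isotypic component $M$ acts as $M_h\otimes I$, so it suffices to test $M$ on a single slice $W_h\otimes\xi_h$ of the multiplicity space --- a \emph{proper subspace} of the component, not a spanning family of it --- and you must then verify that your parametric family is exactly such a slice. Or the paper's more elementary route, which is the key lemma your proposal is missing: \emph{every eigenvalue of $M$ admits an eigenvector of the structured form}. The paper obtains this by taking an arbitrary eigenvector $w$, choosing a set $H$ of minimum size with $u=\sum_{\pi\in\Pi_H}P_\pi w\neq 0$ (such an $H$ exists and has size at most $t$), and observing that $u$ is a $\Pi_H$-invariant eigenvector for the same eigenvalue, hence of the form $\sum_{i,j}\alpha_{i,j}b_{i,j}$. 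Once that is in place, nonnegativity of the quadratic form on the structured family really does bound every eigenvalue, which is what sufficiency requires.

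A second, smaller discrepancy: the paper's test vectors are built from a \emph{single} reference set $H$ of size $h$ (the entry at $J$ depends only on $|J|$ and $|J\cap H|$), not two; and the vanishing of $G_h$ on $\{0,\dots,h-1\}\cup\{n-h+1,\dots,n\}$ is \emph{not} a termwise consequence of the falling-factorial weights $\fallfac{k}{r}\,\fallfac{(n-k)}{h-r}$ --- for $k<h$ the terms with $r\le k$ do not vanish individually. In the paper this vanishing is derived from the minimality of $H$, via the orthogonality $u^\top Z_Q=0$ for all $Q$ with $|Q|<h$ or $|Q|>n-h$. So your closing worry about matching the test-vector family to the class of polynomials in Definition~\ref{def:G(k)} is exactly the right one, and it is not resolved by the bookkeeping you sketch.
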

By Lemma \ref{lemma:G_h}, Theorem \ref{thm:PSD_as_polynomial} is a straightforward  corollary of Theorem~\ref{thm:PSD_as_polynomial_full}. In the following we provide a proof for the latter.
}

\subsection{Proof of Theorem~\ref{thm:PSD_as_polynomial_full}}\label{sect:proof_full}

We
study when the matrix
$
M= \sum_{k=0}^n  z_k \sum_{I\subseteq N, |I|=k}Z_I Z_I^{\top},
$
where $Z_I \in \mathbb{R}^{\PS_t(N)}$
is positive-semidefinite. Theorem~\ref{thm:PSD_as_polynomial_full} allows us to reduce the condition $M \succeq 0$ to inequalities of the form
$
\sum_{k=0}^n \binom{n}{k}z_kp(k) \geq 0
$,
where $p(k)$ is a univariate polynomial of degree $2t$ with some additional remarkable properties. 

A basic key idea that is used to obtain such a characterization is that the eigenvectors of $M$ are ``very well'' structured. This structure is used to get $p(k)$ with the claimed  properties.
\paragraph{The structure of the eigenvectors.}
Let $\Pi$ denote the group of all permutations of the set $N$, i.e. the \emph{symmetric} group. Let $P_\pi$ be the permutation matrix of size $ \PS_t(N)\times \PS_t(N)$ corresponding to any permutation $\pi$ of set $N$, i.e. for any vector $v$ we have $[P_\pi v]_I = v_{\pi(I)}$ for any $I\in \PS_t(N)$ (see Footnote \ref{footnote:set_permutation}). Note that $P_\pi^{-1} = P_\pi^\top$.
%

\begin{lemma} \label{lemma:invariant_aut}
For every $\pi\in \Pi$ we have
$
P^\top_\pi M P_\pi = M
$
or, equivalently, $M$ and $P_\pi$ commute $M P_\pi = P_\pi M$.
\end{lemma}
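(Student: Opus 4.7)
The plan is to reduce the identity $P_\pi^\top M P_\pi = M$ to a transparent action of $P_\pi$ on the rank-one generators $Z_I Z_I^\top$ of $M$, and then exploit the fact that the coefficients $z_k$ depend only on $|I|$.

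First I would compute how $P_\pi$ acts on a zeta vector. By definition of $P_\pi$ we have $[P_\pi v]_I = v_{\pi(I)}$, so $[P_\pi^\top v]_J = v_{\pi^{-1}(J)}$. Applying this to $v = Z_I$ and using $[Z_I]_K = \mathbf{1}[K \subseteq I]$ gives
\begin{equation*}
[P_\pi^\top Z_I]_J \;=\; [Z_I]_{\pi^{-1}(J)} \;=\; \mathbf{1}[\pi^{-1}(J) \subseteq I] \;=\; \mathbf{1}[J \subseteq \pi(I)] \;=\; [Z_{\pi(I)}]_J,
\end{equation*}
so $P_\pi^\top Z_I = Z_{\pi(I)}$. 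Taking transposes (or repeating the calculation with rows) gives $Z_I^\top P_\pi = Z_{\pi(I)}^\top$.

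Next I would plug this into the definition of $M$. Since $P_\pi$ is a constant matrix that can be pulled inside the sum,
\begin{equation*}
P_\pi^\top M P_\pi \;=\; \sum_{k=0}^n z_k^N \!\!\!\sum_{\substack{I \subseteq N \\ |I| = k}} \!\!\! (P_\pi^\top Z_I)(Z_I^\top P_\pi) \;=\; \sum_{k=0}^n z_k^N \!\!\!\sum_{\substack{I \subseteq N \\ |I| = k}} \!\!\! Z_{\pi(I)} Z_{\pi(I)}^\top.
\end{equation*}
Because $\pi$ is a bijection of $N$, the map $I \mapsto \pi(I)$ is a bijection of the collection of $k$-subsets of $N$ onto itself, so re-indexing the inner sum by $I' = \pi(I)$ recovers $\sum_{|I'|=k} Z_{I'} Z_{I'}^\top$, and therefore $P_\pi^\top M P_\pi = M$. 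The equivalent commutation statement $M P_\pi = P_\pi M$ follows by multiplying both sides on the left by $P_\pi$ and using orthogonality $P_\pi P_\pi^\top = \mathrm{Id}$.

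There is essentially no obstacle here: the only point to be careful about is keeping straight the adjoint convention for $P_\pi$ (so that $P_\pi^\top Z_I = Z_{\pi(I)}$ rather than $Z_{\pi^{-1}(I)}$), and the observation that the $z_k^N$ depend only on $|I|$, which is exactly the symmetry hypothesis used implicitly throughout Section~\ref{sect:main_theorem}. This lemma is really just the formal statement that $M$ lies in the commutant of the permutation representation of $\Pi$, and will then be used in the subsequent steps to restrict attention to highly symmetric eigenvectors.
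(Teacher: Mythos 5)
Your proof is correct and follows essentially the same route as the paper: compute the action of $P_\pi^\top$ on zeta vectors to get $P_\pi^\top Z_I = Z_{\sigma(I)}$ for a permutation $\sigma$ of the index set, then re-index the inner sum using that $I \mapsto \sigma(I)$ is a bijection of the $k$-subsets and that the coefficients depend only on $|I|$. The only (immaterial) difference is that the paper's basis-vector computation arrives at $P_\pi^\top Z_I = Z_{\pi^{-1}(I)}$ while your entry-wise computation gives $Z_{\pi(I)}$ — a pure adjoint-convention discrepancy that does not affect the re-indexing argument, and your version is in fact the one consistent with the stated convention $[P_\pi v]_I = v_{\pi(I)}$.
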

\begin{proof}
Let $e_I$ denote the vector with a $1$ in the $I$-th coordinate and $0$'s elsewhere.
Observe
$$
P_\pi^\top Z_I = P_\pi^\top \sum_{Q \subseteq I}e_Q = \sum_{Q \subseteq I}P^\top_\pi e_Q = \sum_{Q \subseteq I}e_{\pi^{-1}(Q)} = \sum_{\pi(H) \subseteq I}e_H = \sum_{H \subseteq \pi^{-1}(I)}e_H = Z_{\pi^{-1}(I)}
$$
Then,
$
P_\pi^\top M P_\pi = \sum_{k=0}^n z_k\sum_{I\subseteq N}P_\pi^\top Z_I Z_I^{\top} P_\pi = \sum_{k=0}^n z_k \sum_{I\subseteq N} Z_{\pi^{-1}(I)}Z_{\pi^{-1}(I)}^\top = M
$.

\end{proof}
\begin{corollary} \label{lemma:eigenvalues}
 If $w \in \mathbb{R}^{\PS_t(N)}$ is an eigenvector of $M$  then $v=P_\pi w$ is also an  eigenvector of $M$ for any $\pi \in \Pi$.
\end{corollary}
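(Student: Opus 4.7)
The plan is to deduce the corollary directly from the commutation identity $M P_\pi = P_\pi M$ already established in Lemma~\ref{lemma:invariant_aut}. The corollary is essentially the standard observation that when a linear operator commutes with a group action, that group action permutes the eigenspaces of the operator (in fact, each eigenspace is invariant under the action). So the real work has already been done in Lemma~\ref{lemma:invariant_aut}; what remains is a one-line spectral argument plus a nondegeneracy check.

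Concretely, first I would assume $Mw = \lambda w$ for some scalar $\lambda \in \mathbb{R}$ (by definition, $w \neq 0$). Then I would compute $M(P_\pi w) = (MP_\pi)w = (P_\pi M) w = P_\pi (\lambda w) = \lambda (P_\pi w)$, where the middle equality is precisely the commutation from Lemma~\ref{lemma:invariant_aut}. This exhibits $P_\pi w$ as a $\lambda$-eigenvector, provided it is nonzero.

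The only point to verify is that $v = P_\pi w \neq 0$, so that $v$ is a genuine eigenvector rather than the zero vector. This is immediate from the fact that $P_\pi$ is a permutation matrix, hence orthogonal with $P_\pi^{-1} = P_\pi^\top$, so $P_\pi$ is invertible and maps nonzero vectors to nonzero vectors. I do not anticipate any obstacle: the entire content sits in Lemma~\ref{lemma:invariant_aut}, and the corollary is a formal consequence. (One could also phrase this more abstractly by saying that each eigenspace of $M$ is a representation of the symmetric group $\Pi$, a viewpoint that will likely be exploited in the subsequent analysis of the spectral structure of $M$.)
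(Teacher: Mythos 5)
Your proof is correct and is essentially identical to the paper's: both apply the commutation $MP_\pi = P_\pi M$ from Lemma~\ref{lemma:invariant_aut} to compute $M(P_\pi w) = P_\pi M w = \lambda P_\pi w$. Your additional remark that $P_\pi w \neq 0$ because $P_\pi$ is invertible is a small but valid point that the paper leaves implicit.
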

\begin{proof}
By the assumption, $Mw=\lambda w$ and by Lemma~\ref{lemma:invariant_aut},
$
Mv = M (P_\pi w) = P_\pi M w = \lambda v
$.

\end{proof}


%

By using Corollary~\ref{lemma:eigenvalues} 
 we can show that the set of interesting eigenvectors have some ``strong'' symmetry properties that will be used in our analysis. In the simplest case, for any eigenvector $w$ we could take the vector $u = \sum_{\pi\in \Pi} P_\pi w $ and observe that the elements of $u$ have the form $u_I = u_J$ for each $I, J$ such that $|I| = |J|$. If $\|u\|\not= 0$ then $u/\|u\|$ and $w/\|w\|$ are two eigenvectors corresponding to the same eigenvalue. The latter implies that by considering \emph{only} eigenvectors having the form $u_I = u_J$ for each $|I|=|J|$ we would consider the eigenvalue corresponding to the ``unstructured'' eigenvector $w$ as well. This is not the case in general, however, since it is possible that $\sum_{\pi\in \Pi} P_\pi w = 0$.

We overcome this obstacle by restricting the permutations in a way which guarantees $u$ to be non-zero. Before going into the details, we introduce some notation.
\begin{definition}
For any $H \subseteq N$, we denote by $\Pi_H$ the permutation group that \emph{fixes} the set $H$ in the following sense:
$
\pi \in \Pi_H \Leftrightarrow \pi(H) = H
$.
\end{definition}
Note that the definition is equivalent to saying that $\pi \in \Pi_H$ if and only if $\pi(i) \in H$ for every $i \in H$ and $\pi(i) \notin H$ for every $i \notin H$.

Now, we choose a subset $H \subseteq N$ such that $\sum_{\pi \in \Pi_I} P_\pi w = 0$ for each $I$ such that $|I| < |H|$ and $u = \sum_{\pi \in \Pi_H} P_\pi w \neq 0$. Such a set $H$ always exists, since otherwise $w$ is a zero vector, since if there is one non-zero entry $w_J$ in $w$, we can take $H = J$ and the resulting $u$ is non-zero.
The choice of $H$ is not unique, but we can always assume that it is the subset of the first $h=|H|$ elements from $N$, i.e. $H=\{1,\ldots,h\}$. Indeed, if it is not the case, there exists a permutation $\pi\in \Pi$ that maps $H$ to the subset of the first $|H|$ elements from $N$ and such that $P_{\pi} w$ is an eigenvector of $M$ by Lemma~\ref{lemma:eigenvalues}.
Now it holds that $u \neq 0$ and the vector $u/\|u\|$ is a unit eigenvector corresponding to the same eigenvalue as $w$ and has many elements that are equal to each other.


\begin{lemma} \label{lemma:permuted_eigenvector}
Let $w \in \mathbb{R}^{\PS_t(N)}$ be a unit eigenvector of $M$ corresponding to eigenvalue $\lambda$, and $H$ be the smallest subset of $N$ such that
$
u = \sum_{\pi \in \Pi_H} P_\pi w \neq 0
$.
Then $u/\|u\|$ is also a unit eigenvector of $M$ corresponding to eigenvalue $\lambda$.
\end{lemma}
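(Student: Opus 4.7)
The plan is to observe that the statement reduces, after unpacking, to a direct commutation argument combined with the linearity of $M$. The hypothesis that $u \neq 0$ is purely what lets us normalize; the eigenvalue claim itself should not require any information about $H$ beyond the fact that $\Pi_H$ is a set of permutations, so that each $P_\pi$ commutes with $M$ by Lemma~\ref{lemma:invariant_aut}.

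Concretely, I would first compute
\[
Mu \;=\; M \sum_{\pi \in \Pi_H} P_\pi w \;=\; \sum_{\pi \in \Pi_H} M P_\pi w,
\]
then use the commutation relation $M P_\pi = P_\pi M$ (which holds for every $\pi \in \Pi$ by Lemma~\ref{lemma:invariant_aut}, in particular for every $\pi \in \Pi_H \subseteq \Pi$) to rewrite this as
\[
\sum_{\pi \in \Pi_H} P_\pi M w \;=\; \sum_{\pi \in \Pi_H} P_\pi (\lambda w) \;=\; \lambda \sum_{\pi \in \Pi_H} P_\pi w \;=\; \lambda u,
\]
using $Mw = \lambda w$ in the middle step and pulling out the scalar $\lambda$. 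This shows $u$ itself is an eigenvector with eigenvalue $\lambda$.

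Finally, since by the choice of $H$ in the statement we have $u \neq 0$, the quantity $\|u\|$ is a nonzero real number, so dividing through gives $M(u/\|u\|) = \lambda (u/\|u\|)$, and $u/\|u\|$ is by construction a unit vector. There is essentially no obstacle here: the argument is the standard fact that if an operator commutes with every element of a set of operators, then the image under $M$ of any $G$-average of an eigenvector of $M$ is the same $G$-average times the eigenvalue. The only thing one must not forget is to invoke Lemma~\ref{lemma:invariant_aut} in the form $M P_\pi = P_\pi M$ (the commuting form proved there), rather than in the conjugation form $P_\pi^\top M P_\pi = M$, so that the rearrangement above is transparent.
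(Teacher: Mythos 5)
Your proof is correct and matches the paper's (implicit) argument: the paper derives this lemma from Corollary~\ref{lemma:eigenvalues}, whose proof is exactly the commutation computation $M(P_\pi w) = P_\pi M w = \lambda P_\pi w$, and then sums over $\pi \in \Pi_H$ by linearity, with the choice of $H$ serving only to guarantee $u \neq 0$ so that normalization is possible. You have simply inlined that corollary; nothing is missing.
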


The following lemma shows the structure of eigenvectors obtained from summing the permutations of any ``unstructured'' eigenvector.

\begin{lemma} \label{lemma:symmetry_of_u}
Let $
u = \sum_{\pi \in \Pi_H} P_\pi w
$. Then the vector $u$ is invariant under the permutations of $\Pi_H$, namely $u_I=u_{\pi(I)}$ for $\pi\in \Pi_H$. Equivalently, $u_I = u_J$ for all $|I|=|J|$ such that $|I \cap H| = |J \cap H|$.
\end{lemma}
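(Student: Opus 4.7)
\textbf{Proof plan for Lemma \ref{lemma:symmetry_of_u}.} The plan is to establish the two stated formulations separately, the first by a direct group-theoretic manipulation, and the second by an orbit argument using that $\Pi_H$ acts transitively on subsets of $N$ of given size and given intersection size with $H$.

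First I would verify that $\Pi_H$ is a subgroup of $\Pi$: if $\pi,\sigma$ both fix the set $H$, then so does $\pi\circ\sigma$ and so does $\pi^{-1}$. Next I would prove the invariance statement $P_\sigma u = u$ for every $\sigma\in\Pi_H$. Using the defining identity $[P_\pi v]_I = v_{\pi(I)}$, a short computation shows that the permutation matrices compose as $P_\sigma P_\pi = P_{\pi\circ\sigma}$. Thus
\[
P_\sigma u \;=\; P_\sigma \sum_{\pi \in \Pi_H} P_\pi w \;=\; \sum_{\pi \in \Pi_H} P_{\pi\circ\sigma}\, w.
\]
Since right-multiplication by $\sigma$ is a bijection of the group $\Pi_H$ onto itself, the reindexed sum is again $\sum_{\pi'\in\Pi_H} P_{\pi'} w = u$. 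This is exactly the coordinate statement $u_{\pi(I)} = u_I$ for all $\pi\in\Pi_H$ and all $I \in \PS_t(N)$.

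Next I would prove the equivalence with the combinatorial description. For the forward direction, let $I,J$ with $|I|=|J|$ and $|I\cap H|=|J\cap H|$. The four sets $I\cap H$, $I\setminus H$, $H\setminus I$, $(N\setminus H)\setminus I$ have, respectively, the same cardinalities as $J\cap H$, $J\setminus H$, $H\setminus J$, $(N\setminus H)\setminus J$. Choosing an arbitrary bijection between each matching pair and combining them yields a permutation $\pi$ of $N$ that satisfies $\pi(H)=H$ (so $\pi\in\Pi_H$) and $\pi(I)=J$; invariance then forces $u_I = u_{\pi(I)} = u_J$. For the converse, any $\pi\in\Pi_H$ preserves cardinalities and satisfies $\pi(I)\cap H = \pi(I\cap H)$ because $\pi$ permutes $H$, so $|\pi(I)\cap H|=|I\cap H|$, and the combinatorial hypothesis immediately gives $u_{\pi(I)} = u_I$.

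The argument is essentially routine; the only point that requires care is getting the direction of composition right in $P_\sigma P_\pi = P_{\pi\circ\sigma}$, which is forced by the convention $[P_\pi v]_I = v_{\pi(I)}$ used in the paper. Once this is fixed, both steps reduce to the standard observation that the stabilizer $\Pi_H$ acts transitively on each class of subsets sharing both a total size and an intersection size with $H$.
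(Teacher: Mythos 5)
Your proposal is correct and follows essentially the same route as the paper: reindex the group sum $\sum_{\pi\in\Pi_H}P_\pi w$ under right-composition by a fixed $\sigma\in\Pi_H$, then use that $\Pi_H$ acts transitively on subsets of fixed size and fixed intersection size with $H$. You spell out two points the paper leaves implicit (the composition rule $P_\sigma P_\pi = P_{\pi\circ\sigma}$ and the explicit construction of the permutation mapping $I$ to $J$), which is fine but not a different argument.
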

\begin{proof}
 We have
$
u_I = \left[\sum_{\pi \in \Pi_H} P_\pi w \right]_I = \sum_{\pi \in \Pi_H} [P_\pi w]_I = \sum_{\pi \in \Pi_H} w_{\pi(I)}= \sum_{\pi \in \Pi_H} w_{\pi(\pi(I))}=u_{\pi(I)}
$ where the last but one equality follows since permutations are bijective.
The claim follows by observing that for all $|I|=|J|$ such that $|I \cap H| = |J \cap H|$ there exists $\pi\in \Pi_H$ such that $\pi(I)=J$.

\end{proof}
Lemma~\ref{lemma:permuted_eigenvector}, Lemma~\ref{lemma:symmetry_of_u} and the arguments above imply Lemma~\ref{lemma:eigenvectors}.
\begin{lemma}\label{lemma:eigenvectors}
For any eigenvalue $\lambda$ of $M$ there exists an $h=0,1,\dots,t$ such that the following is an eigenvector corresponding to $\lambda$:
\begin{equation}\label{eq:eigenform}
u_h = \sum_{i=0}^t \sum_{j=0}^{\min\{h,i\}} \alpha_{i,j} b_{i,j}
\end{equation}
where $H=\{1,\ldots,h\}$, $\alpha_{i,j} \in \R$ and $b_{i,j} \in \mathbb{R}^{\PS_t(N)}$ such that $[b_{i,j}]_Q = 1$ if $|Q| = i$ and $|Q \cap H| = j$, $[b_{i,j}]_Q = 0$ otherwise.
\end{lemma}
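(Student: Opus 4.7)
The plan is to synthesize the three preceding lemmas together with a small existence/minimality argument. Given an eigenvalue $\lambda$ of $M$, I would fix any non-zero eigenvector $w \in \mathbb{R}^{\PS_t(N)}$ with $Mw = \lambda w$ and let $H$ be a subset of $N$ of smallest cardinality such that $u := \sum_{\pi \in \Pi_H} P_\pi w$ is non-zero. The first substantive step is to check that such an $H$ exists and satisfies $|H| \leq t$: picking any index $Q \in \PS_t(N)$ with $w_Q \neq 0$, every $\pi \in \Pi_Q$ fixes $Q$ setwise, so $[P_\pi w]_Q = w_{\pi(Q)} = w_Q$, which gives $u_Q = |\Pi_Q|\, w_Q \neq 0$ for the choice $H = Q$. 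Hence the minimal $h := |H|$ is at most $|Q| \leq t$, which is what forces the outer range $h \in \{0,\ldots,t\}$ claimed in the statement.

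Having obtained $H$, I would invoke Lemma~\ref{lemma:eigenvalues} to replace $w$ by $P_\sigma w$ for a permutation $\sigma \in \Pi$ that maps $H$ to $\{1,\ldots,h\}$; since $P_\sigma w$ is still an eigenvector for $\lambda$, this lets me assume $H = \{1,\ldots,h\}$ without loss of generality. Lemma~\ref{lemma:permuted_eigenvector} then certifies that $u/\|u\|$ (and therefore $u$ itself) is an eigenvector of $M$ for $\lambda$, while Lemma~\ref{lemma:symmetry_of_u} asserts that $u_Q$ depends only on the pair $(|Q|, |Q \cap H|)$. Writing $\alpha_{i,j}$ for the common value of $u_Q$ over all $Q$ with $|Q| = i$ and $|Q \cap H| = j$, the vector $u$ decomposes as $u = \sum_{i,j} \alpha_{i,j}\, b_{i,j}$ by the very definition of the indicator vectors $b_{i,j}$.

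Finally, I would record the index ranges that are forced by the definitions: $0 \leq i \leq t$ because $Q \in \PS_t(N)$, and $0 \leq j \leq \min\{h,i\}$ because $|Q \cap H|$ cannot exceed either $|Q| = i$ or $|H| = h$. This is exactly the shape prescribed in~\eqref{eq:eigenform}. The only step I expect to be non-routine is the bound $h \leq t$, since everything else is either a direct appeal to the preceding lemmas or a definitional unpacking; fortunately, that bound is forced by the simple pointwise calculation above, taking the support index of $w$ as the initial candidate for $H$ and then shrinking it if possible to reach the minimum.
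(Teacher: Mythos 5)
Your proposal is correct and follows essentially the same route as the paper: choose a minimal $H$ with $u=\sum_{\pi\in\Pi_H}P_\pi w\neq 0$ (existence via a support index of $w$), normalize $H=\{1,\dots,h\}$ by Corollary~\ref{lemma:eigenvalues}, and combine Lemmas~\ref{lemma:permuted_eigenvector} and~\ref{lemma:symmetry_of_u} to read off the decomposition into the $b_{i,j}$. Your explicit observation that $h\le t$ because $w$ is indexed by $\PS_t(N)$ is a point the paper leaves implicit, and it is a worthwhile addition.
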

By Lemma~\ref{lemma:eigenvectors}, we have that the positive semidefiniteness condition of $M$ follows by ensuring that for any $h=0,1,\ldots,t$ we have $u_h^\top M u_h \geq 0$, i.e.
$$ u_h^\top M u_h=\sum_{k=0}^n  z_k \underbrace{\sum_{\substack{I\subseteq N\\ |I|=k}}\left(u_h^\top Z_I\right)^2}_{A_k} =   \sum_{k=0}^n z_k \sum_{\substack{I\subseteq N\\ |I|=k}} \left(\sum_{i=0}^t \sum_{j=0}^{\min\{h,i\}} \alpha_{i,j} b_{i,j}^\top Z_I\right)^2 \geq 0
$$

In the following (Lemma~\ref{lemma:polyform}) we show that the above values $A_k$ are interpolated by the univariate polynomial $G_{h}(x)$ defined in Definition~\ref{def:G(k)}. In Lemma~\ref{lemma:G_h} we prove some remarkable properties of $G_{h}(x)$ as claimed in Theorem~\ref{thm:PSD_as_polynomial}.

\begin{definition}\label{def:G(k)}
For any $h\in\{0,\ldots,t\}$, let $G_{h}(k)\in \R[k]$ be a univariate polynomial defined as follows
\begin{align}
G_h(k)=\sum_{r = 0}^{h} \binom{h}{r}h_r(k) \left(\sum_{j=0}^{h}\binom{r}{j} p_j(k-r) \right)^2
\end{align}
where $h_r(k) =  \fallfac{k}{r} \cdot \fallfac{(n-k)}{h-r}$ and $p_j(k-r) = \sum_{i=0}^{t-j} \alpha_{i+j,j}\binom{k-r}{i}$ {\col (for $\alpha_{i,j} \in \R$)}.\footnote{Denote by $\fallfac{x}{m} = x(x-1)\cdots(x-m+1)$ the falling factorial (with the convention that $\fallfac{x}{0} = 1$).}
\end{definition}

%
\begin{lemma}\label{lemma:polyform} For every $k=0,\ldots,n$ the following identity holds
$
A_k=\binom{n}{k} \frac{1}{\fallfac{n}{h} } G_{h}(k)
$.
\end{lemma}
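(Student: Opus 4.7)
The plan is to compute $A_k = \sum_{|I|=k}(u_h^\top Z_I)^2$ directly by combinatorial counting, stratifying by $r = |I\cap H|$, and then matching the resulting expression term-by-term with Definition~\ref{def:G(k)}.

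First I would evaluate $b_{i,j}^\top Z_I$. Since $[b_{i,j}]_Q = 1$ iff $|Q|=i$ and $|Q\cap H|=j$, and $[Z_I]_Q = 1$ iff $Q\subseteq I$, the inner product counts subsets $Q\subseteq I$ satisfying both conditions. For any $I$ with $|I|=k$, setting $r = |I\cap H|$, this count equals $\binom{r}{j}\binom{k-r}{i-j}$; in particular it depends on $I$ only through the pair $(k,r)$. Multiplying by $\alpha_{i,j}$ and reindexing $i = i'+j$, the inner sum over $i$ for fixed $j$ becomes $\binom{r}{j}\sum_{i'=0}^{t-j}\alpha_{i'+j,j}\binom{k-r}{i'} = \binom{r}{j}\,p_j(k-r)$, which is precisely the $p_j$ appearing in Definition~\ref{def:G(k)}. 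Hence
$$u_h^\top Z_I \;=\; f(k,r) \;:=\; \sum_{j=0}^{h} \binom{r}{j}\, p_j(k-r).$$
The constraint $j\leq \min\{h,i\}$ from Lemma~\ref{lemma:eigenvectors} collapses to the single condition $0\leq j\leq h$: the inequality $j\leq i$ is enforced automatically by $i'\geq 0$, and since $\binom{r}{j}=0$ for $j>r$ and $\binom{k-r}{i'}=0$ for $i'>k-r$, no boundary correction is needed.

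Next I would stratify the outer sum over $I$ by $r$. The number of $I\subseteq N$ with $|I|=k$ and $|I\cap H|=r$ is $\binom{h}{r}\binom{n-h}{k-r}$, so by the previous display
$$A_k \;=\; \sum_{r=0}^{h}\binom{h}{r}\binom{n-h}{k-r}\, f(k,r)^2.$$
It remains to rewrite this in the normalization used by Definition~\ref{def:G(k)}. A one-line factorial manipulation yields
$$\binom{n-h}{k-r} \;=\; \binom{n}{k}\cdot \frac{\fallfac{k}{r}\,\fallfac{n-k}{h-r}}{\fallfac{n}{h}} \;=\; \binom{n}{k}\cdot\frac{h_r(k)}{\fallfac{n}{h}},$$
with $h_r(k)$ exactly as in Definition~\ref{def:G(k)}. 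Substituting this into the previous display and pulling $\binom{n}{k}/\fallfac{n}{h}$ out of the sum produces $\binom{n}{k}\,G_h(k)/\fallfac{n}{h}$, as required.

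There is no genuine obstacle in this argument; it is a translation between the symmetry-adapted coordinates provided by Lemma~\ref{lemma:eigenvectors} and the polynomial parametrisation of Definition~\ref{def:G(k)}. The only place warranting care is matching the index ranges in the reindexing $i=i'+j$, but this is painless because binomial coefficients vanish outside their natural range, so any apparent out-of-range term contributes zero. The factorial identity converting $\binom{n-h}{k-r}$ into $h_r(k)/\fallfac{n}{h}$ (up to the global factor $\binom{n}{k}$) is the single algebraic step that makes the normalization work out.
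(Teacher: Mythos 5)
Your proof is correct and follows essentially the same route as the paper's: compute $b_{i,j}^\top Z_I$ as a product of two binomial coefficients, stratify the sum over $I$ by $r=|I\cap H|$, reindex $i=i'+j$ to recover $p_j(k-r)$, and apply the factorial identity $\binom{n-h}{k-r}=\binom{n}{k}\,h_r(k)/\fallfac{n}{h}$. The only difference is cosmetic (you reindex before stratifying rather than after), and your remark that the vanishing of out-of-range binomial coefficients handles the index bookkeeping matches the paper's implicit convention.
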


\begin{proof}

We start noting that for every $i = 0,...,t$, $j = 0,...,|H|$ we have~\footnote{Recall that $\binom{n}{-k}=\binom{n}{n+k}=0$ for any positive integer $k$.}
$$
b_{i,j}^\top Z_I = \binom{|I \cap H|}{j}\binom{|I \setminus H|}{i-j}
$$
Indeed
 \begin{eqnarray*}
 b_{i,j}^\top Z_I = \sum_{Q \in \PS_t(N)} (b_{i,j})_Q (z_I)_Q
= \sum_{\substack{Q \subseteq I, |Q|=i \\ |Q \cap H| = j}} 1 = \binom{|I \cap H|}{j}\binom{|I \setminus H|}{i-j}
 \end{eqnarray*}
It follows that we have
\begin{align*}
\sum_{\substack{I\subseteq N\\ |I|=k}} \left(u^\top Z_I\right)^2 =  \sum_{\substack{I\subseteq N\\ |I|=k}} \left(\sum_{i=0}^t \sum_{j=0}^{|H|} \alpha_{i,j} b_{i,j}^\top Z_I\right)^2 \\
= \sum_{\substack{I\subseteq N\\ |I|=k}} \left(\sum_{i=0}^t \sum_{j=0}^{|H|} \alpha_{i,j} \binom{|I \cap H|}{j}\binom{|I \setminus H|}{i-j} \right)^2
\end{align*}
Splitting the sum over $I$ by considering the intersections $I \cap H$ of sizes $r = 0,...,|H|$, we have
\begin{align*}
\sum_{r = 0}^{|H|}\sum_{\substack{|I|=k \\ |I\cap H| = r}} \left(\sum_{i=0}^t \sum_{j=0}^{|H|} \alpha_{i,j} \binom{r}{j}\binom{k-r}{i-j} \right)^2\\
 = \sum_{r = 0}^{|H|} \binom{|H|}{r} \binom{n-|H|}{k-r} \left(\sum_{j=0}^{|H|}\binom{r}{j} \sum_{i=0}^t \alpha_{i,j}\binom{k-r}{i-j} \right)^2
\end{align*}
Finally, we shift the sum over $i$ by $j$ and thus justify the equality
$$
\sum_{\substack{I\subseteq N\\ |I|=k}} \left(u^\top Z_I\right)^2 = \sum_{r = 0}^{|H|} \binom{|H|}{r} \binom{n-|H|}{k-r} \left(\sum_{j=0}^{|H|}\binom{r}{j} \sum_{i=0}^{t-j} \alpha_{i+j,j}\binom{k-r}{i} \right)^2
$$
Now, the sum over $i$ is a Newton polynomial that we denote by $p_j(k-r) = \sum_{i=0}^{t-j} \alpha_{i+j,j}\binom{k-r}{i}$. Note that by definition, here $\deg(p) = t-j$. Furthermore, observe that
$$
\binom{n-|H|}{k-r} = \binom{n}{k} \frac{1}{\fallfac{n}{|H|} } \fallfac{k}{r} \cdot \fallfac{(n-k)}{|H|-r}
$$
and writing $h_r(k) = \fallfac{k}{r} \cdot \fallfac{(n-k)}{|H|-r}$ yields the claim.
\end{proof}

{\col
It follows that for any unit eigenvector $u$ of the form~\eqref{eq:eigenform} the corresponding eigenvalue is equal~to
$
u^\top M u = \frac{1}{\fallfac{n}{h}} \sum_{k=0}^n z_k \binom{n}{k} G_{h}(k)
$. 
Theorem \ref{thm:PSD_as_polynomial_full} requires that $\sum_{k=0}^n z_k \binom{n}{k} G_{h}(k)\geq 0$ which implies that the eigenvalue $u^\top M u$ is nonnegative. 
}
In the following section we complete the proof by showing that the polynomials $G_h(k)$ of Definition \ref{def:G(k)} satisfy the conditions \eqref{eq:degree}, \eqref{eq:zeros} and \eqref{eq:geq0} of Theorem~\ref{thm:PSD_as_polynomial} (as shown in Lemma~\ref{lemma:G_h}).
\subsection{Properties of the univariate polynomials}
\begin{lemma}\label{lemma:G_h}
 For any $h\in\{0,\ldots,t\}$, the polynomials $G_h(k)$ as defined in Definition \ref{def:G(k)} have the following properties:
 \begin{itemize}
  \item[(a)] $G_h(k)$ is a univariate polynomial of degree at most $2t$,
  \item[(b)] $G_h(k) \geq 0$ for $k \in [h-1,n-h+1]$
  \item[(c)] $G_h(k) = 0$ for every $k \in \set{0,...,h-1} \cup \set{n-h+1,...,n}$.
 \end{itemize}
\end{lemma}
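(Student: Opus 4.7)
I would prove the three items in the order (a), (b), (c). Property (a) is obtained cleanly via a matrix-theoretic reformulation through Lemma~\ref{lemma:polyform}; property (b) is a sign analysis of the summands on the real interval $[h-1,n-h+1]$; property (c) is the most subtle and exploits the fact that the parameters $\alpha_{i,j}$ arise from an eigenvector $u$ obtained by the ``smallest $H$'' construction of Lemma~\ref{lemma:permuted_eigenvector}, so that $u$ lies in the level-$h$ subrepresentation of the $S_n$-action on $\R^{\PS_t(N)}$.

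For (a), write $\sum_{|I|=k}(u^\top Z_I)^2=u^\top B_k u$ with $B_k:=\sum_{|I|=k}Z_IZ_I^\top$; the entries $[B_k]_{P,Q}=\binom{n-|P\cup Q|}{k-|P\cup Q|}$ become $\fallfac{k}{|P\cup Q|}/\fallfac{n}{|P\cup Q|}$ once divided by $\binom{n}{k}$, which is a polynomial in $k$ of degree $|P\cup Q|\leq 2t$. Hence $u^\top(B_k/\binom{n}{k})u$ extends to a polynomial in $k$ of degree at most $2t$, and by Lemma~\ref{lemma:polyform} it agrees with $G_h(k)/\fallfac{n}{h}$ at the $n+1$ integer points $k=0,\ldots,n$; polynomial interpolation in the regime $n\geq h+2t$ (the one of interest throughout the paper) then forces $G_h(k)$ to have degree at most $2t$.

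For (b), I would do a sign analysis of $h_r(k)=\fallfac{k}{r}\fallfac{n-k}{h-r}$: the factor $\fallfac{k}{r}=k(k-1)\cdots(k-r+1)$ is a product of $r$ terms each $\geq 0$ whenever $k\geq r-1$, and symmetrically $\fallfac{n-k}{h-r}\geq 0$ whenever $k\leq n-h+r+1$. Since $r\in\{0,\ldots,h\}$, both bounds hold throughout $[h-1,n-h+1]$, so each summand $\binom{h}{r}h_r(k)Q_r(k)^2$ and hence $G_h(k)$ are nonnegative on this interval.

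For (c), I combine the identity $G_h(k)=\fallfac{n}{h}\,A_k/\binom{n}{k}$ (valid for integer $k$ by (a)) with the level-$h$ structure of $u$. First, the ``smallest $H$'' condition forces $\sum_{\pi\in\Pi_{H'}}P_\pi w=0$ for every $|H'|<h$; an induction on $|Q|$ that takes $H'=Q$ in this identity and looks at the $Q$-entry yields $w_Q=0$, and therefore $u_Q=\sum_{\pi\in\Pi_H}w_{\pi(Q)}=0$, whenever $|Q|<h$. For integer $k\in\{0,\ldots,h-1\}$ the entries $[B_k]_{P,Q}$ vanish unless $|P\cup Q|\leq k<h$, so $A_k=u^\top B_k u=0$ and hence $G_h(k)=0$. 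For $k\in\{n-h+1,\ldots,n\}$ I would pass to the complementary form $A_k=\sum_{|I'|=n-k}(u^\top Z_{N\setminus I'})^2$: since $Z_{N\setminus I'}$ is $\Pi_{I'}$-invariant with $|I'|<h$, a standard branching argument places it in the sum of level-$\leq |I'|<h$ subrepresentations, and since $u$ is orthogonal to every level below $h$ we get $u^\top Z_{N\setminus I'}=0$, hence $A_k=0$. The main obstacle I anticipate is making the level-$h$ constraint on the $\alpha_{i,j}$ precise: it is implicit rather than explicit in Definition~\ref{def:G(k)}, and my plan circumvents this by arguing at the level of $u$ and the matrix $B_k$ rather than unwinding the explicit formula for $G_h(k)$.
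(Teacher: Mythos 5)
Your parts (b) and (c) are essentially the paper's own argument. Part (b) is the same sign analysis of $h_r(k)=\fallfac{k}{r}\,\fallfac{(n-k)}{h-r}$ on $[h-1,n-h+1]$. Part (c) is the same combination of Lemma~\ref{lemma:polyform} with the minimality of $H$: the paper handles the upper range $k\in\{n-h+1,\dots,n\}$ via $\Pi_S=\Pi_{N\setminus S}$ and the vanishing of the $\Pi_S$-symmetrization, which is exactly your ``orthogonal to every level below $h$'' claim made elementary, so the representation-theoretic branching language is not needed; also, no induction on $|Q|$ is required to get $w_Q=0$, since every $\pi\in\Pi_Q$ fixes $Q$ setwise and the $Q$-entry of $\sum_{\pi\in\Pi_Q}P_\pi w$ is just $|\Pi_Q|\,w_Q$. (Minor slip: the identity $G_h(k)=\fallfac{n}{h}\,A_k/\binom{n}{k}$ at integer $k$ comes from Lemma~\ref{lemma:polyform}, not from (a).)

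The genuine gap is in (a). As written in Definition~\ref{def:G(k)}, $G_h(k)$ has a priori degree $h+2t$ (each $h_r$ contributes degree $h$ and each square up to $2t$), and Lemma~\ref{lemma:polyform} identifies $G_h(k)$ with $\fallfac{n}{h}\cdot\bigl(\sum_{|I|=k}(u^\top Z_I)^2\bigr)/\binom{n}{k}$ only at the $n+1$ integers $k=0,\dots,n$, because its proof is a counting argument, not a polynomial identity. Your interpolation step therefore requires $n+1>h+2t$, and this is \emph{not} ``the regime of interest throughout the paper'': the \textsc{Max-Cut} lower bound is invoked at round $t=\lfloor n/2\rfloor$ with $h$ ranging up to $t$, where $h+2t$ can be about $3n/2$, so your proof of (a) does not cover the paper's flagship application. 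The paper instead proves the degree drop unconditionally by direct computation: it expands $G_h$ in the basis $\binom{k-h}{q}\binom{k-h}{s}$ and shows, via the forward-difference operator and Vandermonde's identity (Lemma~\ref{th:degree}), that $\sum_{r}\binom{h}{r}\fallfac{k}{r}\fallfac{(n-k)}{h-r}f(r)$ has degree in $k$ at most $\deg f$, which cancels the extra $h$ powers of $k$. Your argument could be repaired --- for instance, each coefficient of $k^d$ with $d>2t$ in $G_h(k)$ is, for fixed $t$, $h$ and fixed $\alpha_{i,j}$, a polynomial in $n$ that your argument shows vanishes for all sufficiently large $n$, hence identically --- but as stated the interpolation fails precisely where the lemma is needed.
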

\paragraph{Proof of \emph{(a)}.}
\begin{eqnarray*}
G_{h}(k) &=& \sum_{r = 0}^{h} \binom{h}{r}h_r(k) \left(\sum_{j=0}^{h}\binom{r}{j} p_j(k-r) \right)^2\\
&=&\sum_{r = 0}^{h} \binom{h}{r}h_r(k) \left(\sum_{i=0}^{h}\sum_{j=0}^{h}\binom{r}{i}\binom{r}{j} p_i(k-r)p_j(k-r)\right)\\
&=&\sum_{r = 0}^{h} \binom{h}{r}h_r(k) \left(\sum_{i=0}^{h}\sum_{j=0}^{h}\binom{r}{i}\binom{r}{j} \left(\sum_{a=0}^{t-i} \alpha_{a+i,i}\binom{k-r}{a}\right)\left(\sum_{b=0}^{t-j} \alpha_{b+j,j}\binom{k-r}{b}\right)\right)\\
&=&\sum_{i=0}^{h}\sum_{j=0}^{h} \sum_{a=0}^{t-i}\sum_{b=0}^{t-j}  \alpha_{a+i,i}\alpha_{b+j,j} \sum_{q=0}^a \sum_{s=0}^b \underbrace{\binom{k-h}{q}\binom{k-h}{s}\left(\sum_{r = 0}^{h} \binom{h}{r}h_r(k) \binom{r}{i}\binom{r}{j} \binom{h-r}{a-q}\binom{h-r}{b-s}\right)}_{B(k)}
\end{eqnarray*}

Note that $\binom{k-r}{a}=\sum_{q=0}^a \binom{k-h}{q}\binom{h-r}{a-q}$ by Vandermonde's identity.
We prove the theorem by showing that $B(k)$ has degree not larger than $2t$.
\begin{eqnarray*}
B(k)&=&\binom{k-h}{q}\binom{k-h}{s}\underbrace{\left(\sum_{r = 0}^{h} \binom{h}{r}\fallfac{k}{r} \fallfac{(n-k)}{h-r} \overbrace{\binom{r}{i}\binom{r}{j} \binom{h-r}{a-q}\binom{h-r}{b-s}}^{f(r)}\right)}_{C(k)}
\end{eqnarray*}
By Lemma~\ref{th:degree} below, the degree of $C(k)$ is at most $i+j+a-q+b-s$ and thus the degree of $B(k)$ is at most $i+j+a+b=2t$

%
\begin{lemma}\label{th:degree}
The degree of $C(k)$ is at most $i+j+a-q+b-s$.
\end{lemma}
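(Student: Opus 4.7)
The plan is to expand the coefficient polynomial $f(r) = \binom{r}{i}\binom{r}{j}\binom{h-r}{a-q}\binom{h-r}{b-s}$, viewed as a polynomial in $r$ with $h,i,j,a,b,q,s$ held fixed, in the Newton basis $\{\binom{r}{e}\}_{e \geq 0}$ and then track the $k$-degree contributed by each basis vector. Setting $d := i + j + (a-q) + (b-s)$, each of the four binomial factors is a polynomial in $r$ of the appropriate degree ($\binom{r}{i}$ has degree $i$, $\binom{h-r}{a-q}$ has degree $a-q$, etc.), so $f(r)$ has degree at most $d$ and admits a unique expansion $f(r) = \sum_{e=0}^{d} c_e \binom{r}{e}$ with $c_e \in \R$. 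Substituting this into $C(k)$ and exchanging summation orders reduces the lemma to understanding one ``atomic'' sum per value of $e$.

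The key algebraic step I would establish is the identity
\begin{equation} \label{eq:degree_identity_plan}
\sum_{r=0}^{h}\binom{h}{r}\binom{r}{e}\fallfac{k}{r}\fallfac{(n-k)}{h-r} \;=\; \binom{h}{e}\fallfac{k}{e}\fallfac{n-e}{h-e}.
\end{equation}
I would prove \eqref{eq:degree_identity_plan} in three moves: first, apply the subset-of-a-subset identity $\binom{h}{r}\binom{r}{e} = \binom{h}{e}\binom{h-e}{r-e}$ to pull the factor $\binom{h}{e}$ out of the sum; second, reindex by $r' = r - e$ and use $\fallfac{k}{r'+e} = \fallfac{k}{e}\,\fallfac{k-e}{r'}$ to pull out $\fallfac{k}{e}$; third, invoke the Vandermonde-type identity $\sum_{r'=0}^{m}\binom{m}{r'}\fallfac{k-e}{r'}\fallfac{(n-k)}{m-r'} = \fallfac{n-e}{m}$ with $m = h-e$, which reduces to the classical Chu-Vandermonde convolution $\sum_{r'}\binom{k-e}{r'}\binom{n-k}{m-r'} = \binom{n-e}{m}$ after rewriting each falling factorial as a factorial times a binomial coefficient. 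The right-hand side of \eqref{eq:degree_identity_plan} is then visibly a polynomial in $k$ of degree exactly $e$, since $\fallfac{k}{e}$ has degree $e$ in $k$ while $\binom{h}{e}$ and $\fallfac{n-e}{h-e}$ are $k$-independent constants.

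Combining the Newton expansion with \eqref{eq:degree_identity_plan} yields
\[
C(k) \;=\; \sum_{e=0}^{d} c_e\, \binom{h}{e}\fallfac{k}{e}\fallfac{n-e}{h-e},
\]
so $C(k)$ is a sum of polynomials of degrees $e \leq d$ in $k$, hence has degree at most $d = i + j + (a-q) + (b-s)$, which is exactly the claim. The only subtlety I anticipate is the possibility that $d$ exceeds $h$: in principle the Newton expansion could contribute terms with $e > h$, but these are harmless because $\binom{r}{e} = 0$ for every integer $r \in \{0,\ldots,h\}$ when $e > h$, or equivalently because the factor $\binom{h}{e}$ on the right-hand side of \eqref{eq:degree_identity_plan} kills such terms. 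Once this observation is accounted for, the argument is pure binomial bookkeeping and I do not expect any further obstacle.
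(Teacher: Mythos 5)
Your argument is correct, and it takes a genuinely different route from the paper's. The paper also reduces to ``atomic'' sums by expanding $f(r)$ in a basis, but it uses the rising-factorial basis $\fallfac{(r+d)}{d}$ and then bounds the degree of each atomic sum $C'(k)$ by rewriting $\fallfac{k}{r}\fallfac{(r+d)}{d}$ as a $d$-th forward difference $\Delta_X^{d}[\fallfac{(k+X)}{r+d}]_{X=0}$, pulling the difference operator outside the sum, collapsing the inner sum by Vandermonde, and finally appealing to the fact that forward differencing does not raise degree. You instead expand $f(r)$ in the Newton basis $\binom{r}{e}$ and evaluate each atomic sum \emph{exactly}: the chain $\binom{h}{r}\binom{r}{e}=\binom{h}{e}\binom{h-e}{r-e}$, the reindexing with $\fallfac{k}{r'+e}=\fallfac{k}{e}\,\fallfac{(k-e)}{r'}$, and the Chu--Vandermonde convolution together give the closed form $\binom{h}{e}\fallfac{k}{e}\fallfac{(n-e)}{h-e}$, which is visibly of degree $e$ in $k$. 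Both proofs ultimately rest on the same Vandermonde convolution, but yours buys an explicit formula for $C(k)$ rather than only a degree bound, and it avoids the forward-difference calculus entirely, which makes it somewhat more elementary and self-contained. Your handling of the edge case $e>h$ (both sides of your identity vanish, since $\binom{r}{e}=0$ for $r\in\{0,\ldots,h\}$ and $\binom{h}{e}=0$) is also correct and worth keeping explicit, as the paper's basis choice sidesteps this issue less transparently.
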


The claim follows by showing that the degree of $C(k)$ is at most the degree of $f( r )$.
The degree of $f( r )$ is $i+j+a-q+b-s$.

Recall that the forward difference of function $g(X)$ with respect to variable $X$ is a finite difference defined by $\Delta_X[g(X)] = g(X+1)-g(X)$. Higher order differences are obtained by repeated operations of the forward difference operator. We will use $\Delta_X^{\ell}[g(X)]_{X=b}$ to denote the $\ell$-th forward difference evaluated at $X=b$. {\col We will us the following easy to check identity:  $\Delta_X^{d}[\fallfac{(k+X)}{r+d}]=\fallfac{(k+X)}{r}\fallfac{(r+d)}{d}$.}

First note that any polynomial $f( r )$ of degree $\delta$ can written as linear combinations of polynomials $P_d( r )=\risefac{(r+1)}{d}=\fallfac{(r+d)}{d}$ with $0\leq d \leq \delta$. It follows that the claim follows by showing that the degree of the following $C'(k)$ is at most the degree of $P_d( r )$
\begin{eqnarray*}
C'(k) &=& \sum_{r = 0}^{h} \binom{h}{r} \fallfac{(n-k)}{h-r} \cdot \fallfac{k}{r} \fallfac{(r+d)}{d}\\
&=& \sum_{r = 0}^{h} \binom{h}{r} \fallfac{(n-k)}{h-r} \cdot \Delta_X^d\left[ \fallfac{(k+X)}{r+d}\right]_{X=0}\\
&=&  \Delta_X^d\left[  \sum_{r = 0}^{h} \binom{h}{r} \fallfac{(n-k)}{h-r}\fallfac{(k+X)}{r+d}\right]_{X=0}\\
&=&  \Delta_X^d\left[  \fallfac{(k+X)}{d} \sum_{r = 0}^{h} \binom{h}{r} \fallfac{(n-k)}{h-r}\fallfac{(k+X-d)}{r}\right]_{X=0}\\
&=&  \Delta_X^d\left[  \fallfac{(k+X)}{d} \fallfac{(n+X-d)}{h}\right]_{X=0}\\
\end{eqnarray*}
where we have used the linearity of the forward difference operator and the Vandermonde's identity to derive the last equality.
The claim follows by observing that the forward difference operator does not increase the degree of its argument and therefore $C'(k)$ has degree at most~$d$.

\paragraph{Proof of \emph{(b)}.}
Let $k \in [h-1,n-h+1]$. We have
$$
G_h(k)=\sum_{r = 0}^{h} \binom{h}{r}h_r(k) \left(\sum_{j=0}^{h}\binom{r}{j} p_j(k-r) \right)^2
$$
where $h_r(k) = \fallfac{k}{r} \cdot \fallfac{(n-k)}{h-r} \geq 0$ for each $r = 0,...,h$. Therefore $G_h(k)$ is a sum of non-negative numbers $\left(\sum_{j=0}^{h}\binom{r}{j} p_j(k-r) \right)^2$ weighted by positive coefficients $h_r(k)$.

\paragraph{Proof of \emph{(c)}.}

From Lemma~\ref{lemma:polyform} we have that
$$
\frac{1}{\fallfac{n}{h}}\binom{n}{k} G_h(k) = \sum_{\substack{I\subseteq N\\ |I|=k}} \left(u^\top Z_I\right)^2
$$
Therefore $G_{h}(k) = 0$ for $k \in \set{0,...,h-1} \cup \set{n-h+1,...,n}$ if we can show that $u^\top Z_Q = 0$ for all $Q \subseteq N$ such that $|Q| = k$.

With this aim, we start noting that for every set $S\subseteq N$ we have that the permutation group $ \Pi_{S}$ is the same as $\Pi_{N\setminus S}$. Moreover if $|S| < h$ then  $\sum_{\pi \in \Pi_{S}} P_\pi u= 0$, otherwise we would obtain a set $S$ smaller than $H$ with $\sum_{\pi \in \Pi_{S}} P_\pi u\not = 0$ (contradicting our assumption that $H$ is a set with the smallest set size having $\sum_{\pi \in \Pi_{H}} P_\pi u\not = 0$).

Now consider any set $I$ such that $I\subseteq Q$ with $Q\in \{S,N\setminus S\}$ and $|S| < h$. By the previous observations it follows that $[\sum_{\pi \in \Pi_{Q}} P_\pi u]_I=\sum_{\pi \in \Pi_{Q}} P_\pi u_I =\sum_{\pi \in \Pi_{Q}} u_{\pi(I)}= 0$. Note that since $I\subseteq Q$ the set $\{\pi(I): \pi \in \Pi_{Q}\}$ is equal to $\{J: J\subseteq Q, |J|=|I|\}$, since $\Pi_{Q}$ is the permutation group that maps any element $I$ from $Q$ to any other element from $Q$ of the same size.
It follows that $\sum_{\pi \in \Pi_{Q}} u_{\pi(I)} = \sum_{J\subseteq Q, |J|=|I|} u_{J} = 0$. Using the latter we get
$$
u^\top Z_Q  = \sum_{J \subseteq Q} u_J = \sum_{i=0}^{|Q|}\sum_{J \subseteq Q, |J| = i}u_J = 0
$$
proving the claim.

\paragraph{Acknowledgements.}
The authors would like to express their gratitude to Ola Svensson for helpful discussions and ideas regarding this paper.

{\small
\bibliographystyle{abbrv}
\bibliography{min_knapsack_gap}
}


\appendix
\section*{Appendix}

\section{The SoS hierarchy} \label{app:SoS}
In this section we recall the usual definition of the SoS/Lasserre hierarchy~\cite{Lasserre01} and justify Definition~\ref{def:sos_definition}. Notice that the SDP hierarchy that we discuss here is the dual certificate of a refutation of the positivstellensatz proof system, for further information about the connection to the proof system we refer the reader to~\cite{MekaPW15}. In our setting we restrict ourselves to problems with $0/1$-variables and linear constraints. More precisely, we consider the following general optimization problem $\mathbb{P}$: given a multilinear polynomial $f:\{0,1\}^n\rightarrow \mathbb{R}$
\begin{equation}\label{eq:polyproblem}
\mathbb{P}: \quad \min\{f(x)| x\in\{0,1\}^n \cap K\}
\end{equation}
where $K$ is a polytope defined by $m$ linear inequalities $g_{\ell}(x)\geq 0 \text{ for } \ell\in [m]$. Many basic optimization problems are special cases of $\mathbb{P}$. For example, any $k$-ary boolean constraint satisfaction problem, such as \textsc{Max-Cut}, is captured by~\eqref{eq:polyproblem} where a degree $k$ function $f(x)$ counts the number of satisfied constraints, and no linear constraints $ g_{\ell}(x)\geq 0$ are present. Also any $0/1$ integer linear program is a special case of~\eqref{eq:polyproblem}, where $f(x)$ is a linear function.

Lasserre~\cite{Lasserre01} proposed a hierarchy of SDP relaxations parameterized by an integer $r$,
\small
\begin{equation}\label{eq:lass1}
 \min\{L(f)| L: \mathbb{R}[X]_{2r}\rightarrow \mathbb{R},  L(1)=1, L(x^2-x)=0 \text{ and } L(u^2), L(u^2 g_{\ell})\geq 0,  \forall \text{ polynomial } u \}
\end{equation}
\normalsize
 where $L: \mathbb{R}[X]_{2r}\rightarrow \mathbb{R}$ is a linear map with $\mathbb{R}[X]_{2r}$ denoting the ring $\mathbb{R}[X]$ restricted to polynomials of degree at most $2r$.\footnote{In \cite{BarakBHKSZ12}, $L(p)$ is written $\tilde{ \mathbb{E}}[p]$ and called the ``pseudo-expectation'' of $p$.} Note that~\eqref{eq:lass1} is a relaxation since one can take $L$ to be the evaluation map $f\rightarrow f(x^*)$ for any optimal solution $x^*$.

Relaxation~\eqref{eq:lass1} can be equivalently formulated in terms of \emph{moment matrices}~\cite{Lasserre01}. In the context of this paper, this matrix point of view is more convenient to use and it is described below. In our notation we mainly follow the survey of Laurent~\cite{Laurent03} (see also \cite{Rot13}).

\paragraph{Variables and Moment Matrix.} Let $N$ denote the set $\{1,\ldots,n\}$. The collection of all subsets of $N$ is denoted by $\PS(N)$. For any integer $t\geq 0$, let $\PS_t(N)$ denote the collection of subsets of $N$ having cardinality at most~$t$.
Let $y\in \mathbb{R}^{\PS(N)}$. For any nonnegative integer $t\leq n$, let $M_t(y)$ denote the matrix with $(I,J)$-entry $y_{I\cup J}$ for all $I,J\in \PS_t(N)$. Matrix $M_t(y)$ is termed in the following as the \emph{t-moment matrix} of $y$. For a linear function $g(x) = \sum_{i=1}^n g_{i} \cdot x_i + g_0$, we define $g*y$ as a vector, often called \emph{shift operator}, where the $I$-th entry is $(g*y)_I=\sum_{i=1}^n g_i y_{I\cup\{i\}} + g_0 y_I$. Let $f$ denote the vector of coefficients of polynomial $f(x)$ (where $f_I$ is the coefficient of monomial $\Pi_{i\in I}x_i$ in $f(x)$).

\begin{definition}\label{def:lassDef}
The $t$-th round SoS (or Lasserre) relaxation of problem~\eqref{eq:polyproblem}, denoted as $\SoS_{t}(\mathbb{P})$, is the following
\begin{equation}
\SoS_{t}(\mathbb{P}):\quad \min\left\{\sum_{I \subseteq N} f_I y_I | y\in \mathbb{R}^{\PS_{2t+2d}(N)} \text{ and } y\in \mathbb{M} \right\}
\end{equation}
where $\mathbb{M}$ is the set of vectors $y\in \mathbb{R}^{\PS_{2t+2d}(N)}$ that satisfy the following PSD conditions
\begin{eqnarray}
y_{\varnothing}&=&1,  \\
M_{t+d}(y)&\succeq& 0,  \\
M_{t}( g_{\ell}*y )&\succeq& 0 \qquad \ell\in [m]
\end{eqnarray}
where $d=0$ if $m=0$ (no linear constraints), otherwise $d=1$.
\end{definition}

\paragraph{Change of variables.} A solution of the SoS hierarchy as defined in Definition~\ref{def:lassDef} is given by a vector $y \in \mathbb{R}^{\PS_{2t+2d}(N)}$. Next we show we can make a change of basis and replace the variables $y_I$ with other variables $\y_I$ that are indexed by \textit{all} the subsets of $N$.
Variable $\y_I$ can be interpreted as the ``relaxed'' indicator variable for the integral solution $x_I$, i.e. the $0/1$ solution obtained by setting $x_i = 1$ for $i \in I$, and $x_i = 0$ for $i\in N\setminus I$.
We use this change of basis  in order to obtain a useful decomposition of the moment matrix as a sum of rank one matrices of special kind. Here it is not necessary to distinguish between the moment matrix of the variables and constraints, hence in what follows we denote a generic vector by $w \in \mathbb{R}^{\PS_{2q}(N)}$, where $q$ is either $t$ or $t+1$.
\begin{definition}
\label{def:change_of_basis}
 Let $w \in \mathbb{R}^{\PS_{2q}(N)}$. For every $I \in \PS(N)$, define a vector $w^N \in \mathbb{R}^{\PS(N)} $ such that
$$
w_I = \sum_{I \subseteq H \subseteq N} w^N_H
$$
\end{definition}

Note that the inverse (for $|I|\leq 2t$) is
\begin{equation}
\label{def:change_of_basis2}
w_I^N=\sum_{H \subseteq N \setminus I, |H\cup I|\leq 2t} (-1)^{H} w_{I \cup H}
\end{equation}

To simplify the notation, we note that the moment matrix of the variables is structurally similar to the moment matrix of the constraints: if $z \in \mathbb{R}^{\PS_{2q}(N)}$ is a vector such that $z_I = \sum_{i = 1}^N A_{\ell i} y_{I\cup \set{i}} - b_\ell y_I$ for some $\ell$, then $[M_{t}(g_\ell * y)]_{I,J} = z_{I \cup J}$. Hence, the following lemma holds for the moment matrix of variables and constraints.

\begin{lemma} \label{lemma:M_in_zeta_vector_form}
 Let $w \in \mathbb{R}^{\PS_{2q}(N)}$, and $M \in \mathbb{R}^{\PS_{q}(N)\times \PS_{q}(N)}$ such that $M_{I,J} = w_{I \cup J}$. Then
 $$
 M = \sum_{H \subseteq N} w^N_H Z_HZ_H^\top
 $$
\end{lemma}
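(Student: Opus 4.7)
The plan is to verify the claimed matrix identity entry-by-entry by unpacking the two definitions involved, without invoking anything deeper. First I would fix arbitrary $I,J\in \PS_q(N)$ and look at the $(I,J)$-entry of the right-hand side. Since $Z_H \in \mathbb{R}^{\PS_q(N)}$ is the $0/1$ zeta vector with $(Z_H)_Q = 1$ iff $Q\subseteq H$, the rank-one matrix $Z_H Z_H^\top$ has
\[
(Z_H Z_H^\top)_{I,J} = (Z_H)_I\cdot (Z_H)_J = \mathbf{1}[I\subseteq H]\cdot \mathbf{1}[J\subseteq H] = \mathbf{1}[I\cup J \subseteq H].
\]
Summing over $H$ therefore collapses to a sum restricted to the supersets of $I\cup J$:
\[
\Bigl[\sum_{H\subseteq N} w^N_H\, Z_H Z_H^\top\Bigr]_{I,J} \;=\; \sum_{H\subseteq N,\, I\cup J\subseteq H} w^N_H .
\]

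Second, I would recognize the right-hand side of the display above as precisely the defining relation of $w^N$ given in Definition~\ref{def:change_of_basis}, applied to the subset $I\cup J$. By that definition this sum equals $w_{I\cup J}$, which equals $M_{I,J}$ by the hypothesis on $M$. Since $I,J$ were arbitrary in $\PS_q(N)$, this establishes the claimed matrix equality.

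The only point that requires a moment's care—and thus is the main (and rather minor) obstacle—is the bookkeeping of indices: one must check that $|I\cup J|\leq |I|+|J|\leq 2q$, so that $w_{I\cup J}$ is indeed a coordinate of the vector $w\in \mathbb{R}^{\PS_{2q}(N)}$ and Definition~\ref{def:change_of_basis} legitimately expresses $w_{I\cup J}$ in terms of $\{w^N_H\}$. No further machinery is needed; the lemma is essentially a direct linearization statement expressing that the map sending a symmetric-function-like tensor $w^N$ to the moment-style matrix with entries $w_{I\cup J}$ is given in the $Z_H$-basis by a sum of rank-one outer products weighted by $w^N_H$.
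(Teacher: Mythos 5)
Your proof is correct and follows essentially the same route as the paper: both compute the $(I,J)$-entry of $\sum_H w^N_H Z_HZ_H^\top$ as an indicator of $I\cup J\subseteq H$ and then invoke Definition~\ref{def:change_of_basis} for the subset $I\cup J$. Your added remark that $|I\cup J|\leq 2q$ is a harmless (and slightly more careful) piece of bookkeeping that the paper leaves implicit.
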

\begin{proof}
Since $M_{I,J} = w_{I \cup J}$, we have by the change of variables that
$$
[M]_{I,J} = \sum_{I \cup J \subseteq H \subseteq N} w^N_H = \sum_{H \subseteq N} \chi_{I \cup J}(H) w^N_H
$$
where $\chi_{I \cup J}(H)$ is the 0-1 indicator function such that $\chi_I(H) = 1$ if and only if $I \cup J \subseteq H$. On the other hand, $[Z_HZ_H^\top]_{I,J} = [Z_H]_I[Z_H]_J = 1$ if $I \cup J \subseteq H$, and 0 otherwise. Therefore $[Z_HZ_H^\top]_{I,J} = \chi_{I \cup J}(H)$.

\end{proof}
By the previous lemma it follows that given a solution by using variables $\{w_I^N\}$ we can obtain a solution with variables $\{w_I:|I|\leq 2t\}$. Viceversa, given any assignment of variables in $\{w_I:|I|\leq 2t\}$ we can find an assignment of variables in $\{w_I^N\}$ such that $M_{I,J}=w_{I\cup J}$ and $M = \sum_{H \subseteq N} w^N_H Z_HZ_H^\top$. Indeed, set $w_I^N=0$ for every $I$ such that $|I|>2t$. For the remaining ones note that for $|I|\leq 2t $ the square matrix corresponding to the following equalities
$w_I = \sum_{I \subseteq H \subseteq N} w^N_H
$ is invertible since it is upper triangular.

\begin{lemma}\cite{Laurent03} \label{lemma:change_of_variables_for_constraints}
Given $y \in \mathbb{R}^{\PS_{2t+2}(N)}$, for the vector $z_I = \sum_{i = 1}^n A_{\ell i} y_{I\cup \set{i}} - b_\ell y_I$ we have
\begin{equation} \label{eq:z^N_I}
z^N_{I} = g_\ell(x_I) y^N_I
\end{equation}
where $g_\ell(x_I) = \sum_{i = 1}^N A_{\ell i}x_i - b_\ell$ is a linear function corresponding to the constraint $\ell$, evaluated at $x_I$ such that $x_i = 1$ if $i \in I$ and $x_i = 0$ otherwise.
\end{lemma}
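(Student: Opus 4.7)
The plan is to verify the identity by direct computation. Define the candidate vector $\tilde{z}^N$ by $\tilde{z}^N_I := g_\ell(x_I)\,y^N_I$ for every $I\subseteq N$. Since the change of basis $w \mapsto w^N$ (appropriately truncated at level $2t$, with higher levels set to zero) is invertible, it suffices to show that $\tilde{z}^N$ reproduces $z$ under the forward map $w_I = \sum_{I\subseteq H\subseteq N} w^N_H$ whenever $|I|\leq 2t$. Uniqueness of the inverse then forces $z^N = \tilde{z}^N$.

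The key computation is as follows. First I would expand $g_\ell(x_H) = \sum_{j\in H} A_{\ell j} - b_\ell$ inside the sum:
\[
\sum_{H\supseteq I} g_\ell(x_H)\,y^N_H \;=\; \sum_{j=1}^n A_{\ell j}\sum_{\substack{H\supseteq I\\ j\in H}} y^N_H \;-\; b_\ell\sum_{H\supseteq I} y^N_H.
\]
The second sum equals $y_I$ by the defining relation of the change of basis. For the first sum I would split on whether $j\in I$ or not: if $j\in I$ then $\{H: H\supseteq I,\,j\in H\}=\{H:H\supseteq I\}$ and the inner sum is $y_I = y_{I\cup\{j\}}$; if $j\notin I$ then $\{H: H\supseteq I,\,j\in H\}=\{H:H\supseteq I\cup\{j\}\}$ and the inner sum is $y_{I\cup\{j\}}$. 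In either case the inner sum equals $y_{I\cup\{j\}}$.

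Substituting back yields
\[
\sum_{H\supseteq I} g_\ell(x_H)\,y^N_H \;=\; \sum_{j=1}^n A_{\ell j}\, y_{I\cup\{j\}} \;-\; b_\ell\, y_I \;=\; z_I,
\]
which is precisely the forward change of basis applied to $\tilde z^N$. By uniqueness of the (truncated) inverse change of basis, we conclude $z^N_I = \tilde z^N_I = g_\ell(x_I)\,y^N_I$ as claimed.

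There is no real obstacle here beyond bookkeeping: the identity follows from the observation that $y_{I\cup\{j\}}=y_I$ when $j\in I$, which allows the two cases $j\in I$ and $j\notin I$ to merge into a single clean formula. The only point that deserves care is that the change of basis is defined with the truncation convention $y^N_H = 0$ for $|H|>2t+2$ (so that all sums over $H\supseteq I$ are finite), but this does not affect the algebra above since $z_I$ is only defined for $|I|\leq 2t$ and the forward relation is required to hold exactly on that range.
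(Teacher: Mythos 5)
Your proof is correct and follows essentially the same route as the paper's: both verify that plugging $g_\ell(x_H)y^N_H$ into the forward change of basis $z_I=\sum_{H\supseteq I}z^N_H$, expanding $g_\ell(x_H)$, and swapping the order of summation recovers $\sum_i A_{\ell i}y_{I\cup\{i\}}-b_\ell y_I$. Your explicit case split on $j\in I$ versus $j\notin I$ is just an unpacked version of the paper's step where the conditions $H\supseteq I$ and $i\in H$ are merged into $H\supseteq I\cup\{i\}$.
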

\begin{proof}
 We need to show that this choice of $z_I^N$ yields $z_I = \sum_{I \subseteq H \subseteq N} z^N_H$. We plug in \eqref{eq:z^N_I} to obtain
\begin{eqnarray*}
\sum_{I \subseteq H \subseteq N} z^N_H = \sum_{I \subseteq H \subseteq N} g_\ell(x_H) y^N_H = \sum_{I \subseteq H \subseteq N} \left[\sum_{i = 1}^n A_{\ell i}x_i - b_\ell \right]_{x = x_H} y^N_H \\
= \sum_{I \subseteq H \subseteq N}\left( \sum_{i = 1}^n \left[A_{\ell i}x_i\right]_{x = x_H}y^N_H - b_\ell  y^N_H \right) = \sum_{I \subseteq H \subseteq N} \sum_{i = 1}^n \left[A_{\ell i}x_i\right]_{x = x_H}y^N_H - b_\ell  y_I
\end{eqnarray*}
Here the term $\left[A_{\ell i}x_i\right]_{x = x_H}y^N_H$ is $A_{\ell i}y^N_H$ if $i \in H$ and 0 otherwise. Taking this into account and changing the order of the sums, the above becomes
$$
\sum_{i = 1}^n \sum_{I \cup \set{i} \subseteq H \subseteq N} A_{\ell i}y^N_H - b_\ell  y_I = \sum_{i = 1}^n A_{\ell i} y_{I\cup \set{i}} - b_\ell y_I
$$
which proves the claim.

\end{proof}

The above discussion together with the observation that $y_{\emptyset}=1$ implies that $\sum_{J\subseteq N} y_J^N =1$ and justifies Definition~\ref{def:sos_definition}. Finally, we remark the following
\begin{lemma}
Let $f$ denote the vector of coefficients of polynomial $f(x)$ of \eqref{eq:polyproblem}. Then the objective value of the solution $y$ is given by
$$
\sum_{I \subseteq N} f_I y_I=\sum_{I\subseteq  N} f(x_I) y_I^N
$$
\end{lemma}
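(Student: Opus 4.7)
The plan is to expand $f(x_I)$ using the multilinearity of $f$, then swap the order of summation and recognize the change-of-basis identity from Definition~\ref{def:change_of_basis}. The key observation is that since $f$ is multilinear, we can write $f(x) = \sum_{J \subseteq N} f_J \prod_{i \in J} x_i$, and evaluating at the $0/1$ point $x_I$ gives $\prod_{i \in J}(x_I)_i = 1$ exactly when $J \subseteq I$ and $0$ otherwise. Hence $f(x_I) = \sum_{J \subseteq I} f_J$.

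Substituting this into the right-hand side and interchanging the two summations yields
\[
\sum_{I \subseteq N} f(x_I)\, y_I^N \;=\; \sum_{I \subseteq N} \Bigl(\sum_{J \subseteq I} f_J\Bigr) y_I^N \;=\; \sum_{J \subseteq N} f_J \sum_{J \subseteq I \subseteq N} y_I^N.
\]
At this point I would invoke the change-of-variables relation from Definition~\ref{def:change_of_basis}, namely $y_J = \sum_{J \subseteq I \subseteq N} y_I^N$, which applies to every $J \subseteq N$ (including those with $|J| \leq 2t$, as used in the SoS formulation; for $|J| > 2t$ the coefficient $f_J$ either is zero by the degree bound or is irrelevant since only degree-$\le 2t$ moments enter the objective). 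This immediately gives $\sum_{J \subseteq N} f_J \sum_{J \subseteq I \subseteq N} y_I^N = \sum_{J \subseteq N} f_J\, y_J$, which is exactly the left-hand side.

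There is no real obstacle here: the statement is essentially a tautology following from (i) the definition of evaluation of a multilinear polynomial at a $0/1$ point and (ii) the change-of-basis identity already stated earlier in the appendix. The only minor care needed is to note that the sum on the left is implicitly restricted to $|I| \leq 2t$ (since only those moments are defined), but this restriction is consistent with the multilinear polynomial $f$ used in \eqref{eq:polyproblem} having degree at most the relevant bound, so the two sums agree term-by-term after the interchange.
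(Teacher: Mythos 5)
Your proof is correct and is essentially the argument the paper intends: the paper's own proof is just a pointer to "similar lines as Lemmas~\ref{lemma:M_in_zeta_vector_form} and \ref{lemma:change_of_variables_for_constraints}", whose computations are exactly your expansion of $f(x_I)=\sum_{J\subseteq I} f_J$, interchange of summations, and the change-of-basis identity $y_J=\sum_{J\subseteq H\subseteq N} y^N_H$. Your remark about the degree restriction $|J|\leq 2t+2d$ is the right minor caveat and matches the setup of Definition~\ref{def:change_of_basis}.
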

\begin{proof}
 Similar lines as the proof of Lemmas \ref{lemma:M_in_zeta_vector_form} and \ref{lemma:change_of_variables_for_constraints}.

 \end{proof}

\section{Change of variables for Max-Cut} \label{Sect:app_max_cut}
Grigoriev \cite{Grigoriev01} and Laurent \cite{Laurent03a} proved that the following solution
is feasible for any $\K\leq n/2$ up to round $t\leq \lfloor \K\rfloor$ for the SoS hierarchy given in Definition~\ref{def:lassDef}:
$$y_{I} = \frac{\binom{\K}{|I|}}{\binom{n}{|I|}}  \qquad \forall I\subseteq N:|I|\leq 2t$$
Using the change of basis~\eqref{def:change_of_basis2}, solution $\{y_I\}$ is equivalent to solution $\{\y_I\}$:
\begin{align}
\y_I &= \sum_{H\subseteq N\setminus I} (-1)^{|H|} y_{I\cup H}
= \sum_{h=0}^{n-|I|} \binom{n-|I|}{h} (-1)^{h} \frac{\binom{\K}{|I|+h}}{\binom{n}{|I|+h}}
  \notag\\
&=  y_I \binom{\K-|I|-1}{n-|I|}  (-1)^{n-|I|}= (n+1) \binom{{\K}}{n+1} \frac{(-1)^{n-|I|}}{{\K}-|I|} \label{prob1}
\end{align}
where we use the identity $\sum_{\K=0}^m (-1)^{\K} \binom{n}{{\K}}=(-1)^m\binom{n-1}{m}$.

\end{document}